\def\BibTeX{{\rm B\kern-.05em{\sc i\kern-.025em b}\kern-.08em
    T\kern-.1667em\lower.7ex\hbox{E}\kern-.125emX}}
\definecolor{orange}{rgb}{1, .36, .08}
\newcommand{\fakeparagraphnospace}[1]{\noindent\textbf{#1.}}
\newtheorem{definition}{\textbf{Definition}}
\newtheorem{theorem}{\textbf{Theorem}}
\newtheorem{corollary}{Corollary}[theorem]
\newtheorem{lemma}[theorem]{Lemma}
\begin{document}

\title{Fun-SAT: Functional Corruptibility-Guided SAT-Based Attack on Sequential Logic Encryption
\thanks{$^1$Yinghua Hu and Yuke Zhang contributed equally to this work.}
}

\author{
Yinghua Hu,$^1$ Yuke Zhang,$^1$ Kaixin Yang, Dake Chen, Peter A. Beerel, and Pierluigi Nuzzo\\ 
\small Department of Electrical and Computer Engineering, University of Southern California, Los Angeles, CA, USA \\ \{yinghuah, yukezhan, kaixinya, dakechen, pabeerel, nuzzo\}@usc.edu\\[0ex]
}

\maketitle

\begin{abstract}
The SAT attack has shown to be efficient against most combinational logic encryption methods. It can be extended to attack sequential logic encryption techniques by leveraging circuit unrolling and model checking methods. However, with no guidance on the number of times that a circuit needs to be unrolled to find the correct key, the attack tends to solve many time-consuming Boolean satisfiability (SAT) and model checking problems, which can significantly hamper its efficiency. In this paper, we introduce \texttt{Fun-SAT}, a functional corruptibility-guided SAT-based attack that can significantly decrease the SAT solving and model checking time of a SAT-based attack on sequential encryption by efficiently estimating the minimum required number of circuit unrollings. \texttt{Fun-SAT} relies on a notion of functional corruptibility for encrypted sequential circuits and its relationship with the required number of circuit unrollings in a SAT-based attack. 
Numerical results show that \texttt{Fun-SAT} can be, on average, $\boldsymbol{90\times}$ faster than previous attacks against state-of-the-art encryption methods, when both attacks successfully complete before a one-day time-out. Moreover, \texttt{Fun-SAT} completes before the time-out on many more circuits.
\end{abstract}

\begin{IEEEkeywords}
Logic Encryption, Hardware Security, SAT-Based Attack
\end{IEEEkeywords}

\section{Introduction}\label{sec:intro}

Integrated circuits (ICs) are often regarded as the root of trust  of modern Internet-of-Things applications. However, their integrity and confidentiality can be seriously compromised by reverse engineering, among other threats, conducted by malicious parties in the supply chain.  Various methods have been proposed to mitigate this threat, such as split manufacturing~\cite{xiao2015efficient}, gate camouflaging~\cite{yasin2016camoperturb}, and logic encryption~\cite{yasin2017evolution}. Among these,  
logic encryption has gained significant attention over the past decade. 

A class of logic encryption methods, namely, \emph{combinational logic encryption}~\cite{roy2010ending,rajendran2013fault, yasin2016improving,chowdhury2021enhancing} aims to insert programmable elements and extra key ports into a portion of a circuit's combinational logic, so that the correct functionality can only be accessed by applying the correct key.
Models and metrics to help quantify the security and overhead of combinational logic encryption~\cite{vivek2019system,hu2019models} have been recently proposed and, consequently, security-oriented design tools~\cite{patnaik2018best,hu2021risk, mohan2021hardware} have started to appear.
On the other hand, \emph{sequential logic encryption}~\cite{chakraborty2009harpoon,desai2013interlocking} aims to create new states, marked as encrypted states, and modify the transitions in the circuit finite state machine. 
When powered on, the circuit is configured to be in an encrypted state 
and a predefined key sequence must be provided at the input ports before entering the true reset state. 

A growing number of attacks have targeted combinational encryption methods over the years, the most notable being the \emph{SAT attack}~\cite{subramanyan2015evaluating}, based on Boolean satisfiability (SAT) solving.  
Given an encrypted netlist and a functional chip, i.e., an \emph{oracle} providing the correct input/output response, the SAT attack searches for the correct key by solving a series of SAT problems that efficiently eliminate wrong keys. 
Because it assumes that the internal state is part of the input/output response, 
the SAT attack cannot be directly applied to a sequential circuit 
when the internal state is not scanned or is protected using a secure scan chain~\cite{wang2017secure}.  
However, in this case, a \emph{SAT-based attack}~\cite{shamsi2019kc2, el2017reverse} can still be developed 
by unrolling the sequential circuit 
to form a larger combinational circuit that represents the behavior of the original circuit over a number of clock cycles, and by applying the SAT attack to the unrolled version. A similar method can be used to formulate SAT-based attacks against sequential logic encryption~\cite{meade2017revisit,hu2020sanscrypt,hu2020sanscrypt_extended}.

A major challenge for these unrolling-based attacks on sequential circuits stems from the potentially large number of cycles a circuit needs to be unrolled in order to find the correct key, hence the growing size of the combinational circuits that need to be analyzed. To limit the size of the unrolled circuit, the attack usually starts by performing a SAT attack over a small number of unrollings. 
However, once the SAT attack terminates successfully, 
there is no guarantee that the set of candidate keys, obtained by matching the oracle response over a bounded time-horizon, will also match the circuit response for longer horizons. 
A model checking problem is then cast to verify the correctness of all the candidate keys. The attack gradually increases the unrolling depth and performs the SAT attack on the unrolled circuits 
until all the spurious keys are pruned out and the remaining keys are proven correct. Solving multiple instances of model checking and SAT attacks for growing unrolling depths can be expensive and drastically affect the feasibility of the overall attack.

This paper shows that the efficiency of SAT-based attacks on sequential logic encryption can be significantly improved by introducing an effective method to estimate the minimum number of unrollings needed to find the correct key. We propose \texttt{Fun-SAT}, a functional corruptibility-guided SAT-based attack, which relies on a notion of functional corruptibility for a sequential circuit to reduce both the SAT-attack effort spent to produce candidate key sets and the model checking effort spent to prove that a candidate key set is correct. 
Our contributions can be summarized as follows: 
\begin{itemize}
    \item We introduce a notion of functional corruptibility for sequential circuits and characterize 
    its relation with the set of wrong keys that are pruned out by a SAT attack at each unrolling depth.
    \item We develop \texttt{Fun-SAT}, an attack to sequential logic encryption that leverages functional corruptibility 
    to significantly reduce the overall execution time. 
    \item We evaluate \texttt{Fun-SAT} on two state-of-the-art sequential logic encryption methods
showing that, on average, it can be $90\times$ faster than previous SAT-based attacks. Only $0.7\%$ of the experiments timed out after one day, compared with $19\%$ for the previous attack. 
\end{itemize}

The remainder of the paper is organized as follows. Section~\ref{sec:background} introduces two state-of-the-art sequential logic encryption methods and
reviews the mechanism of the SAT-based attack on sequential logic encryption. Section~\ref{sec:method} discusses the notion of functional corruptibility and its implications for SAT-based attacks.
Section~\ref{sec:attack} details the attack flow and implementation. 
In Section~\ref{sec:experiment}, we validate the effectiveness of the proposed attack in comparison with the previous SAT-based attack. Finally, we conclude the paper in Section~\ref{sec:conclude}.

\section{Preliminaries}\label{sec:background}

We first provide an introduction to sequential logic encryption and two state-of-the-art methods in this category. Then, we illustrate the mechanism of existing SAT-based attacks on sequential logic encryption. 

\subsection{Sequential Logic Encryption}

Sequential logic encryption methods encrypt the finite state machine (FSM) of a circuit via additional states and transitions~\cite{chakraborty2009harpoon,desai2013interlocking,meade2017revisit,dofe2018novel}. 
The FSM can operate in the \emph{encrypted mode} or in the \emph{functional mode}. 
After reset, the encrypted circuit is in the encrypted mode. It will transition to the functional mode, exhibiting the correct functionality, once it is provided with a correct sequence of multi-bit inputs, i.e., a correct key sequence, via the primary input ports. In this paper, we focus on two representative state-of-the-art sequential encryption methods, namely, \emph{HARPOON}~\cite{chakraborty2009harpoon} and \emph{Interlocking}~\cite{desai2013interlocking}, but describe applications to other decryption mechanisms in Section~\ref{sec:experiment}.

In \emph{HARPOON}~\cite{chakraborty2009harpoon}, only a single path is designed from the encrypted to the functional mode, which makes the scheme potentially vulnerable to attacks that can analyze the state transition diagram of the encrypted FSM to recognize the single transition from the encrypted to the functional mode~\cite{meade2017revisit}.  \emph{Interlocking}~\cite{desai2013interlocking} addresses this vulnerability by designing multiple paths between the encrypted and the functional mode so that the boundary between the logic in the two modes is less distinguishable. 
The circuit operates correctly only if the correct path is taken by  providing the associated key sequence. 
Otherwise, errors will still occur despite the circuit enters the functional mode.

\subsection{SAT-Based Attacks}\label{sec:sat_mechanism}

\begin{figure}[t]
\centerline{\includegraphics[width=0.9\columnwidth]{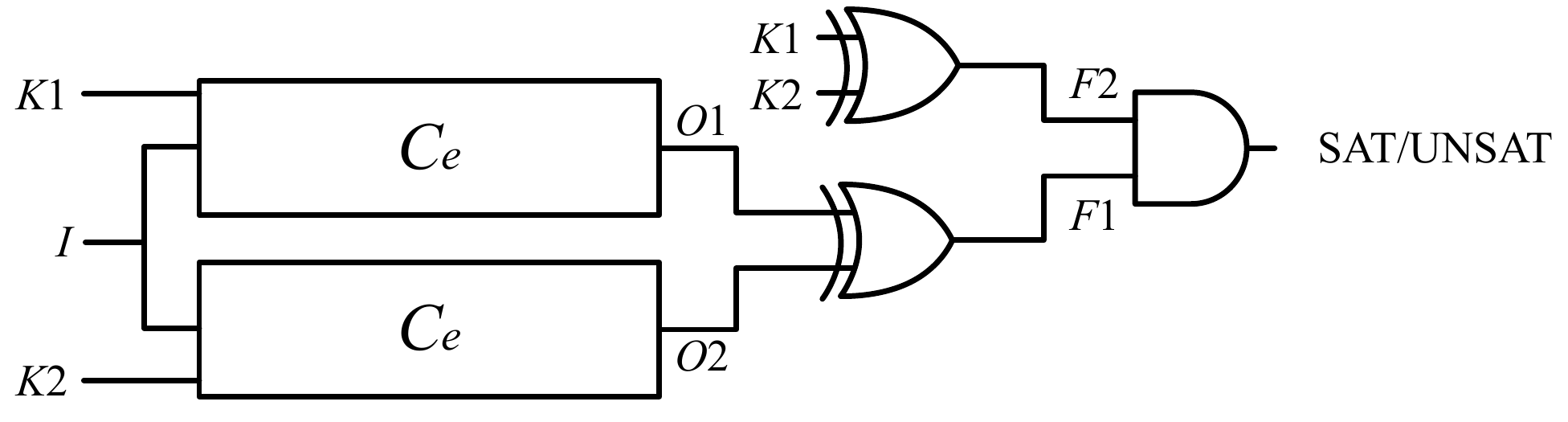}}
\caption{Circuit visualization of the SAT instance at the first iteration of the SAT attack.}
\vspace{-5mm}
\label{fig:miter}
\end{figure}

\begin{figure*}[t]
\centering
\subfigure[]{
\includegraphics[width=0.5\columnwidth]{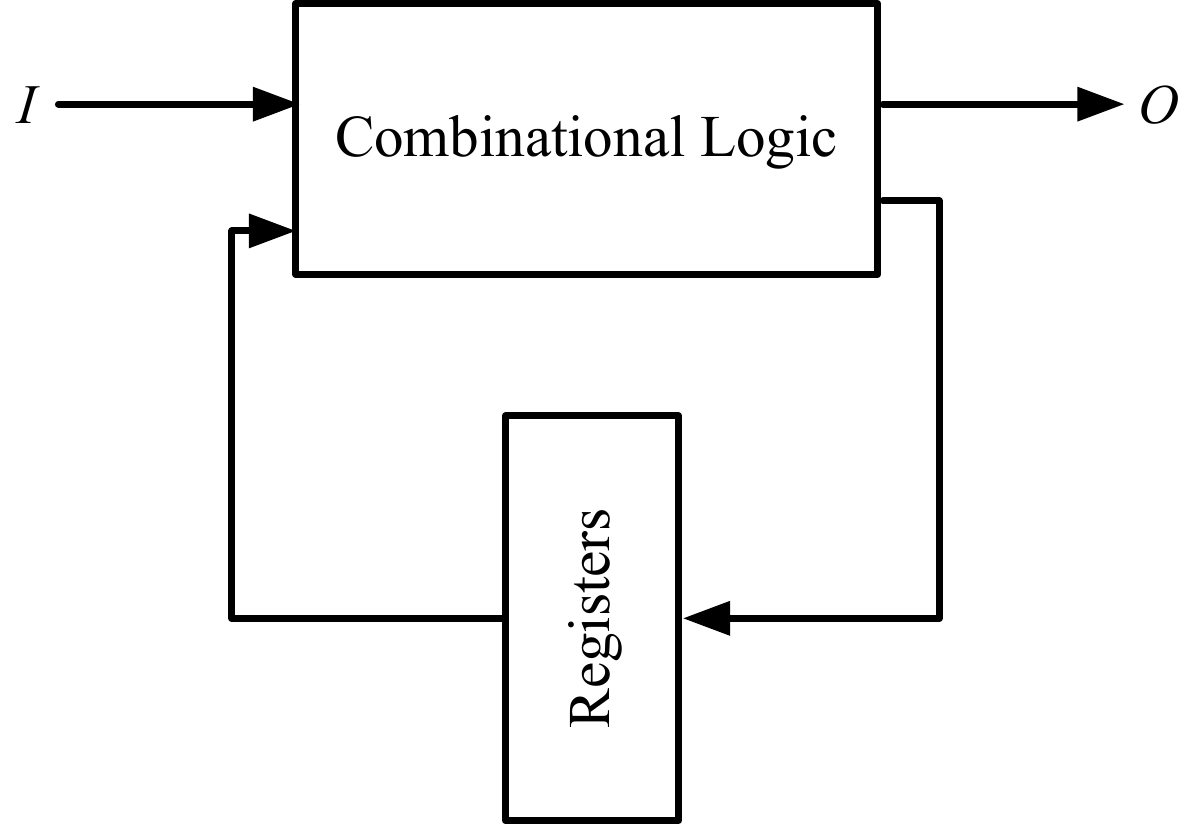}}
\subfigure[]{
\includegraphics[width=1.4\columnwidth]{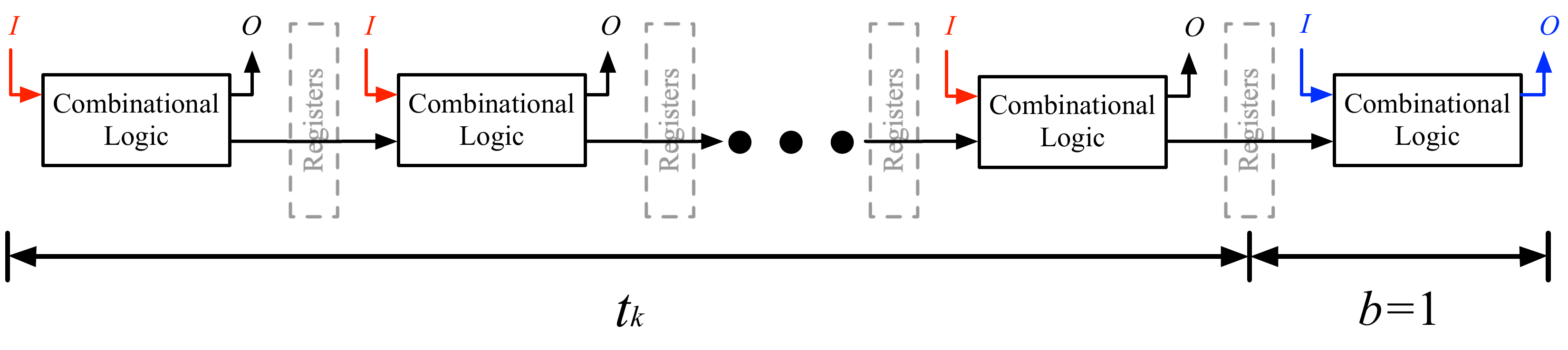}\label{fig:after_unroll}
}
\caption{Schematic of (a) a sequential circuit and (b) its unrolled version.}
\vspace{-4mm}
\label{fig:unroll}
\end{figure*}

The traditional SAT attack~\cite{subramanyan2015evaluating} as well as other SAT-based attacks assume the availability of two resources: the encrypted circuit netlist $C_{e}$ and an oracle circuit $C_{o}$, i.e., a black box providing the correct input/output response. 
The traditional SAT attack~\cite{subramanyan2015evaluating} can find the correct key of a combinational circuit which has no internal loops. The attack first constructs a SAT instance based on the circuit in Fig.~\ref{fig:miter}, where $I$ denotes the input ports, $K1$ and $K2$ the key ports, and $O1$ and $O2$ the output ports. 
Once a SAT instance is solved, the satisfying (SAT) assignment  identifies an input $i$ for which two different keys $k_1$ and $k_2$ lead to two different outputs. This input, called a \emph{distinguishing input pattern} (DIP), will be used to query the oracle $C_{o}$ for the corresponding correct output. 
The input/output pair from the oracle is then encoded into a set of clauses that are appended to the current SAT instance. Once a DIP $i^{dip}$ is found, the updated SAT instance effectively prunes out of the search space a set of wrong keys $\overline{K}$ as follows, 
$$\overline{K}=\{k|f'(i^{dip},k)\neq f(i^{dip})\},$$
where $f$ and $f'$ are the functions implemented by $C_{o}$ and $C_{e}$, respectively. 
The attack terminates when the updated SAT instance becomes unsatisfiable, meaning that all the wrong keys have been excluded, and any key in the remaining set can be returned as correct.

Because the SAT attack assumes that the internal state is part of the input/output response, it cannot be directly applied to sequential circuits when the internal state is not scannable.
A set of SAT-based attacks~\cite{shamsi2019kc2, el2017reverse} have circumvented this limitation by leveraging circuit unrolling and model checking. Similarly, the SAT attack can be extended to sequential logic encryption by unrolling a sequential circuit, as shown in Fig.~\ref{fig:unroll}, to form a larger combinational circuit that represents the behavior of the sequential circuit over a fixed number of clock cycles~\cite{meade2017revisit,hu2020sanscrypt,hu2020sanscrypt_extended}. We assume that the attacker knows the length of the key sequence $t_k$. The minimum number of unrollings required to start the attack is then $t_k+1$. The input ports in red and blue are treated as key ports and input ports, respectively, for the unrolled circuit. The output ports in blue act as the circuit output.

\begin{algorithm}[t]
 \caption{Reference SAT-Based Attack~\cite{shamsi2019kc2,el2017reverse,meade2017revisit}}
 \begin{algorithmic}[1]\label{alg:reference}
 \renewcommand{\algorithmicrequire}{\textbf{Input:}}
 \renewcommand{\algorithmicensure}{\textbf{Output:}}
\REQUIRE Encrypted netlist $C_{e}$, oracle $C_{o}$, key sequence length $t_k$
\ENSURE  Correct key sequence $k^*$
\STATE $b = 1$
\WHILE{\textbf{True}}
    \STATE $k^*, L_{dip}, L_{odip}=sat\_attack(C_{e}, C_{o}, t_k, b)$
    \IF{$!key\_verify(k^*, L_{dip}, L_{odip})$}
        \STATE $b = update(b)$
    \ELSE 
        \STATE break
    \ENDIF
\ENDWHILE
 \RETURN $k^*$
 \end{algorithmic} 
  \end{algorithm}

Algorithm~\ref{alg:reference} summarizes the flow of this attack. 
The combinational circuit in Fig.~\ref{fig:unroll} allows performing a SAT attack (line 3) to search for DIPs\footnote{For sequential circuits, a DIP can also be called a \emph{distinguishing input sequence} (DIS)~\cite{shamsi2019kc2}.} that can prune out wrong keys.
However, once a SAT attack terminates successfully for $t_k+b$ unrollings, there is no guarantee that the set of candidate keys, obtained by matching the oracle response over $b$ cycles, will also match its response after $b$ cycles.
A model checking problem~\cite{el2017reverse} can then be formulated (line 4) to verify whether this is the case, by taking as input one of the candidate keys $k^*$, the list of DIPs $L_{dip}$ and their corresponding outputs $L_{odip}$ generated by the SAT attack in line 3. Otherwise,  
the attack will try a different number of unrollings determined by an update function
(line 5) and repeat this process until all the remaining keys are proven correct. 
The update function usually increments or multiplies $b$ by a constant number. 
However, without any guidance on the unrolling depth that is required to eliminate all the wrong keys, 
Algorithm~\ref{alg:reference} tends to still require many SAT and model checking problems to be solved, especially when the required unrolling depth is large. 
\texttt{Fun-SAT} aims to significantly improve the efficiency of this attack on sequential logic encryption by directly estimating the number of unrollings that are required to prune out all the wrong keys. 

\section{Bounded-Depth Functional Corruptibility}\label{sec:method}

We introduce a notion of functional corruptibility (FC) for sequential circuits and discuss how it will be used to estimate the minimum number of required circuit unrollings for a successful SAT-based attack.  
Consistently with the literature, 
we assume the attacker's access to
the encrypted netlist $C_{e}$, the black-box oracle $C_{o}$, and the key length $t_k$. 
We discuss extensions to the case of unknown $t_k$ in Section~\ref{sec:experiment}. In the following, we say that the \emph{unrolling depth} is $b$ for $C_{o}$ and $C_{e}$ to mean that the circuit is unrolled for $b$ and $t_k+b$ cycles, respectively. 

\subsection{FC and Unrolling Depth}
\label{sec:FC_insight}

We denote by $b_{req}$ 
the minimum number of unrollings required to prune out all the wrong keys for a sequential SAT-based attack.
Let $I$ and $O$ be the sets of input and output ports of both $C_{o}$ and $C_{e}$, respectively. Let $f_b: \mathbb{B}^{b|I|}\rightarrow \mathbb{B}^{b|O|}$ be the function implemented by the $b$-unrolled version of $C_{o}$, i.e., the function represented by the combinational circuit $C^b_{o}$ obtained after unrolling $C_{o}$ for $b$ cycles. We also say that $C^b_{o}$ has depth $b$. Similarly, we denote by $f_b':\mathbb{B}^{b|I|}\times \mathbb{B}^{t_k|I|}\rightarrow \mathbb{B}^{b|O|}$ the $b$-unrolled version $C^b_{e}$ of $C_{e}$. 
We also denote by $SAT(b)$ the traditional SAT attack on $C^b_{e}$  and by $\overline{K}_{b}$ the set of wrong keys pruned out by $SAT(b)$. Finally, we denote by $|p|$ the length of a sequence $p$ and recall that a partial order can be defined over sequences as follows.
\begin{definition}[Partial Order Over Sequences]\label{def:domination}
Let $p$ and $q$ be two sequences with $|p|<|q|$. We say that $p$ (strictly) precedes (or is less than) $q$, written $p \prec q$, if and only if the following holds: $$ p_i=q_i, \forall i \in \{1, 2, ..., |p|\},$$
where $p_i$ is the $i$-th element of $p$, that is, if and only if $p$ is a prefix of $q$. Otherwise, we say that $p$ does not precede $q$, i.e., $p\nprec q$. 
\end{definition}
For example, $0110 \prec 011011$ holds while we have $0010 \nprec 011011$.
We introduce a notion of functional corruptibility for a sequential circuit by resorting to its $b$-unrolled version as follows:
\begin{definition}[$b$-Depth Functional Corruptibility]\label{def:fc_seq}
The $b$-depth functional corruptibility of a circuit pair $(C_{e}, C_{o})$ is the ratio between the number of corrupted output values of $C^b_{e}$ with respect to $C^b_{o}$ and the total number of primary input
and key combinations for $C^b_{e}$, i.e., 
$$FC_b = \frac{1}{2^{(b+t_k)|I|}} \sum_{i\in \mathbb{B}^{b|I|}} \sum_{k\in \mathbb{B}^{t_k|I|}}\mathbbm{1}(f_b(i)\neq f_b'(i,k)),$$
where $\mathbbm{1}(.)$ is the indicator function. 
\end{definition}

To study how $FC_b$ evolves with $b$, we associate a tag to the errors introduced by $f_b'$, the function implemented by the $b$-unrolled version $C^b_{e}$ of the encrypted circuit.

\begin{definition}[Error Tag]\label{def:inherit}
We can associate a tag to an input $i\in \mathbb{B}^{b|I|}$ and a key $k\in \mathbb{B}^{t_k|I|}$
via the map $\mathcal{T}_b: \mathbb{B}^{b|I|} \times \mathbb{B}^{t_k|I|} \to \mathbb{N}\cup\{\bot\}$ defined as follows: 
\begin{equation}\label{opt_formulation}\notag
    \begin{aligned}
    &\mathcal{T}_b(i,k) = \left\{
        \begin{aligned}
            & \bot,\ {\rm if}\ f_{b}'(i,k)= f_{b}(i);\\
            & 1,\ {\rm if}\ f_{b}'(i,k)\neq f_{b}(i)\ {\rm and }\ b=1;\\
            & b,\ {\rm if}\ f_{b}'(i,k)\neq f_{b}(i),\ b>1,\\ 
            &\ \ \ \ \ \ {\rm and }\ f_{b-1}'(j,k)= f_{b-1}(j);\\
            & \mathcal{T}_{b-1}(j,k),\ {\rm otherwise;}
         \end{aligned}
     \right.
    \end{aligned}
\end{equation}
where $j \prec i$ and $j\in \mathbb{B}^{(b-1)|I|}$.
\end{definition}

\begin{figure}[t]
\centering
\includegraphics[width=0.8\columnwidth]{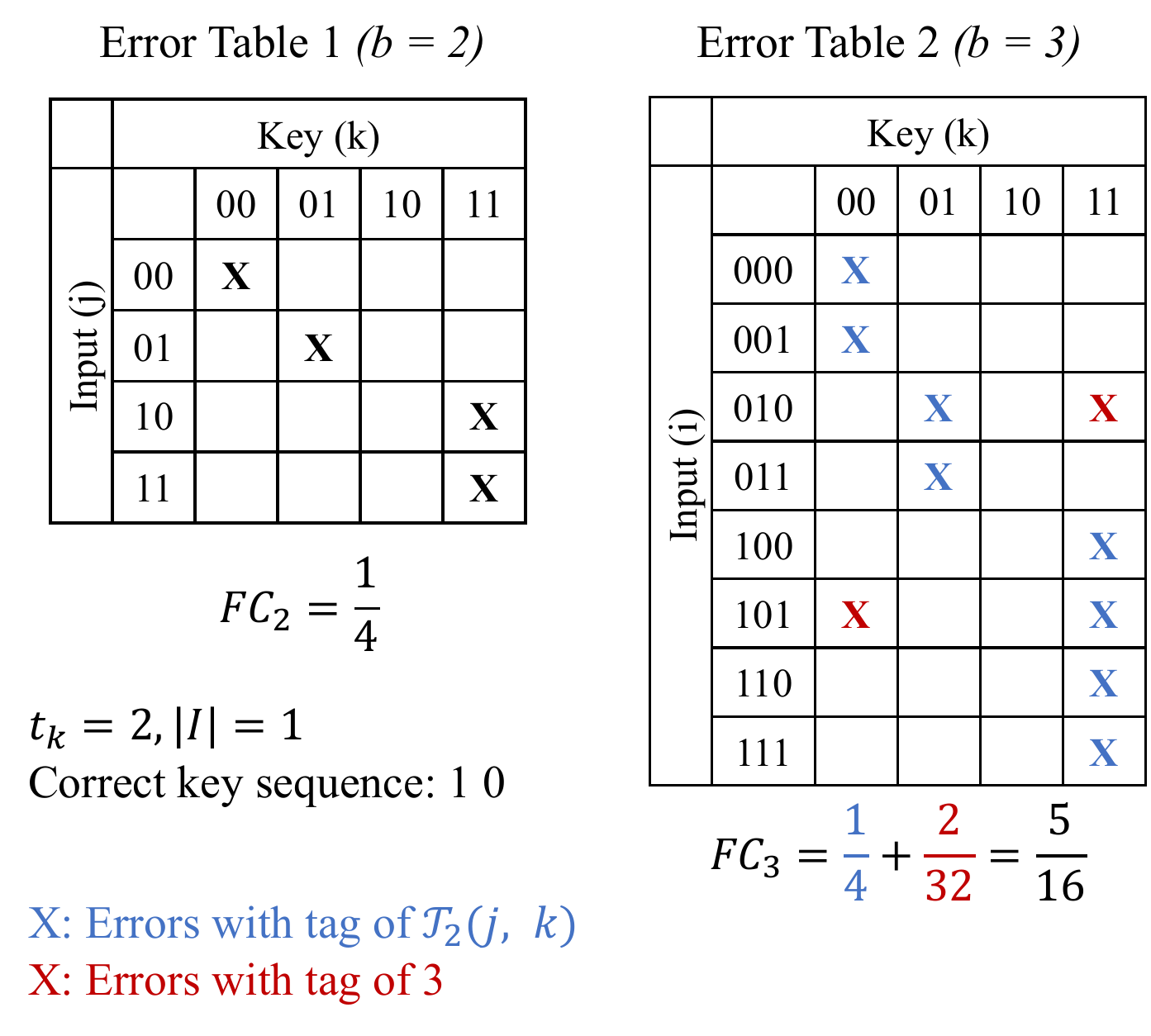}
\vspace{-3mm}
\caption{Error distributions of a small encrypted circuit ($|I|=1$ and $t_k=2$) for the first two and three clock cycles.}
\vspace{-5mm}
\label{fig:error_type}
\end{figure}

For example, Fig.~\ref{fig:error_type} shows two tables marking the errors introduced by $f_{b}'$ for $b=2$ and $b=3$, with $|I|=1$ and $t_k=2$. Each entry is indexed by an input value $i$ and a key value $k$. If $f_{b}'(i,k)\neq f_{b}(i)$, we mark the corresponding entry with an ``x.'' By Definition~\ref{def:inherit}, the errors marked in red are tagged with $3$, while the blue ones depend on errors that were already introduced by $f_{2}'$ and will be tagged based on the error tags they had in $f_{2}'$. We can then state the first result describing the behavior of $FC_b$.

\begin{theorem}\label{theorem:fc_eq}
For all $b>1$, $FC_b=FC_{b-1}$ holds if and only if $\mathcal{T}(i,k) < b$ holds $\forall \ i\in \mathbb{B}^{b|I|}$, $\forall \ k\in \mathbb{B}^{t_k|I|}$, i.e., if and only if no new errors, i.e., errors tagged by $b$, are introduced by $f_b'$. 
\end{theorem}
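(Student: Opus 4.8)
The plan is to turn the claimed equivalence into an arithmetic identity: I will show that $FC_b = FC_{b-1} + N_b/2^{(b+t_k)|I|}$, where $N_b \ge 0$ counts exactly the input/key pairs that receive the tag $b$, so that $FC_b = FC_{b-1}$ holds precisely when $N_b = 0$, i.e.\ when no new errors are introduced.

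The starting point is the elementary \emph{prefix property} of unrolling. If $i\in\mathbb{B}^{b|I|}$ has $(b-1)|I|$-prefix $j$ (so $j\prec i$), then the first $(b-1)|O|$ output bits of $f_b(i)$ equal $f_{b-1}(j)$, and, for every key $k$, the first $(b-1)|O|$ output bits of $f_b'(i,k)$ equal $f_{b-1}'(j,k)$. This is immediate from the unrolling construction of Fig.~\ref{fig:after_unroll}: the first $b-1$ stages of $C^b_{o}$ (resp.\ of $C^b_{e}$ with $k$ fixed on the key ports) reproduce $C^{b-1}_{o}$ (resp.\ $C^{b-1}_{e}$), since each circuit is deterministic and started from a fixed reset state. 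A consequence is that $f_b(i)=f_b'(i,k)$ holds if and only if $f_{b-1}(j)=f_{b-1}'(j,k)$ \emph{and} the $b$-th output block agrees.

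Next I set $E_b := 2^{(b+t_k)|I|}FC_b = \sum_{i\in\mathbb{B}^{b|I|}}\sum_{k\in\mathbb{B}^{t_k|I|}}\mathbbm{1}(f_b(i)\neq f_b'(i,k))$ and split the sum over $i$ by writing $i=(j,i_b)$ with $j\in\mathbb{B}^{(b-1)|I|}$ and $i_b\in\mathbb{B}^{|I|}$. Fix a pair $(j,k)$. If $f_{b-1}(j)\neq f_{b-1}'(j,k)$, the prefix property forces $f_b((j,i_b))\neq f_b'((j,i_b),k)$ for all $2^{|I|}$ choices of $i_b$; the number of such $(j,k)$ is $E_{b-1}$, so this case contributes $2^{|I|}E_{b-1}$. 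If instead $f_{b-1}(j)=f_{b-1}'(j,k)$, then $f_b((j,i_b))\neq f_b'((j,i_b),k)$ happens exactly when the $b$-th output block differs, which --- again by the prefix property, and because $b>1$ with the prefix not yet erring --- is precisely the condition $\mathcal{T}_b((j,i_b),k)=b$ of Definition~\ref{def:inherit}. Letting $N_b$ be the total contribution of this second case, one gets $E_b = 2^{|I|}E_{b-1} + N_b$ with $N_b=|\{(i,k):\mathcal{T}_b(i,k)=b\}|\ge 0$, and dividing by $2^{(b+t_k)|I|}$ yields $FC_b=FC_{b-1}+N_b/2^{(b+t_k)|I|}$.

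The theorem then follows at once: since $N_b\ge 0$, we have $FC_b=FC_{b-1}$ iff $N_b=0$, i.e.\ iff no pair $(i,k)$ is tagged by $b$; and because at depth $b$ every value produced by Definition~\ref{def:inherit} lies in $\{1,\dots,b\}\cup\{\bot\}$, this is the same as requiring $\mathcal{T}(i,k)<b$ for all $i\in\mathbb{B}^{b|I|}$, $k\in\mathbb{B}^{t_k|I|}$ (reading $\bot<b$), which is exactly the stated condition that no new errors are introduced by $f_b'$. The only delicate point is the bookkeeping in the second case: one must verify that, when the prefix carries no error, a fresh disagreement in the $b$-th output block is labeled with tag $b$ rather than an inherited tag, and conversely that every tag-$b$ pair arises this way, so that $N_b$ genuinely equals the number of newly tagged errors. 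I expect this correspondence to be the main (though routine) obstacle; everything else is arithmetic.
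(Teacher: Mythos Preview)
Your proposal is correct and follows essentially the same counting argument as the paper: both partition the errors of $f_b'$ into those inherited from $f_{b-1}'$ (exactly $2^{|I|}$ extensions per depth-$(b{-}1)$ error) and those newly tagged $b$, yielding $E_b=2^{|I|}E_{b-1}+N_b$. The only difference is organizational: the paper proves the two implications separately (one direct, one by contradiction), whereas you package them into the single identity $FC_b=FC_{b-1}+N_b/2^{(b+t_k)|I|}$, which is slightly cleaner but not a genuinely different route.
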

\begin{proof}

We first prove that $FC_b=FC_{b-1}$ holds if $f_b'$ does not introduce new errors, i.e., errors tagged by $b$. Suppose that $n_{b-1}$ is the number of errors introduced by $f_{b-1}'$. 
Assume that an input $j\in \mathbb{B}^{(b-1)|I|}$ and a key $k\in \mathbb{B}^{t_k|I|}$ lead to an output error, i.e., 
$f_{b-1}(j)\neq f_{b-1}'(j,k)$ holds. By Definition~\ref{def:domination}, there are $2^{|I|}$ inputs $i$ such that $i \in \mathbb{B}^{b|I|}$ and $j\prec i$ holds. Therefore, $f_b'$ inherits $n_{b-1} 2^{|I|}$ errors from $f_{b-1}'$.  
By Definition~\ref{def:fc_seq}, we obtain
$$FC_b=\frac{n_{b-1}\cdot 2^{|I|}}{2^{b|I|+t_k|I|}}=\frac{n_{b-1}}{2^{(b-1)|I|+t_k|I|}}=FC_{b-1}.$$

We now prove that $f_b'$ does not introduce new errors if $FC_b=FC_{b-1}$ holds. Suppose by contradiction that $FC_b=FC_{b-1}$ holds and $f_b'$ introduces indeed new errors. Let 
$n^{<b}$ and $n^{=b}$ be the number of errors with tags less than and equal to $b$ in $f_b'$, respectively. By Definition~\ref{def:inherit}, the number of errors in $f_{b-1}'$ is 
$$n= \frac{n^{<b}}{2^{|I|}}.$$
Therefore, we have 
$$FC_{b-1} = \frac{n}{2^{(b-1)|I|+t_k|I|}}= \frac{n^{<b}}{2^{b|I|+t_k|I|}},$$
and 
$$FC_b =\frac{n^{<b}+n^{=b}}{2^{b|I|+t_k|I|}},$$
leading to $FC_{b-1} < FC_b$, which violates our initial assumption. 
\end{proof}

From Theorem~\ref{theorem:fc_eq} and its proof we infer that $FC_b$ can only remain constant or increase with $b$ as new errors, tagged by $b$, are observed at the output of the $b$-unrolled circuit. This leads to the following result, stating that $FC_b$ monotonically increases with $b$, independently of the sequential encryption method adopted.  

\begin{corollary}
\label{theorem:monotonic}
For all $b>1$, we obtain $FC_{b} \geq FC_{b-1}$. 
\end{corollary}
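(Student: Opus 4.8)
The plan is to obtain Corollary~\ref{theorem:monotonic} as an immediate consequence of Theorem~\ref{theorem:fc_eq} together with the counting identity already established inside its proof, by splitting on a dichotomy that holds for every $b>1$: either $f_b'$ introduces no new error (no error tagged by $b$) or it introduces at least one. There is no third possibility, so exhausting these two cases suffices.

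First I would treat the case in which $f_b'$ introduces no new error. Theorem~\ref{theorem:fc_eq} gives directly $FC_b = FC_{b-1}$, hence $FC_b \geq FC_{b-1}$. Second, I would treat the complementary case. Reusing the argument from the second half of the proof of Theorem~\ref{theorem:fc_eq}, let $n^{<b}$ and $n^{=b}$ denote the numbers of input/key pairs whose error tag is, respectively, less than $b$ and equal to $b$ for $f_b'$. By Definition~\ref{def:inherit} each error of $f_{b-1}'$ lifts to exactly $2^{|I|}$ errors of $f_b'$ (one per one-step input extension $j \prec i$, with $j \in \mathbb{B}^{(b-1)|I|}$ and $i \in \mathbb{B}^{b|I|}$), and these lifted errors are precisely the ones tagged below $b$, so the number of errors of $f_{b-1}'$ equals $n^{<b}/2^{|I|}$. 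Substituting into Definition~\ref{def:fc_seq} yields $FC_{b-1} = n^{<b}/2^{(b+t_k)|I|}$ and $FC_b = (n^{<b}+n^{=b})/2^{(b+t_k)|I|}$; since in this case $n^{=b} > 0$, we get $FC_b > FC_{b-1}$. Combining the two cases gives $FC_b \geq FC_{b-1}$ for all $b>1$. Because Definitions~\ref{def:fc_seq} and \ref{def:inherit} and Theorem~\ref{theorem:fc_eq} make no assumption on how the FSM is encrypted, the monotonicity holds independently of the sequential encryption method adopted.

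I do not expect a genuine obstacle here: the only substantive fact is that an error present at depth $b-1$ is never ``healed'' at depth $b$ but instead reappears in exactly $2^{|I|}$ input extensions --- which follows from determinism of the unrolled circuit, i.e., the first $b-1$ output frames of $f_b(\cdot)$ and $f_b'(\cdot,k)$ coincide with $f_{b-1}(\cdot)$ and $f_{b-1}'(\cdot,k)$ --- and this has already been used while proving Theorem~\ref{theorem:fc_eq}. The only care needed is bookkeeping: keeping the inequality as ``$\geq$'' so that the equality case (no new errors) is included, and ensuring the dichotomy is stated exhaustively.
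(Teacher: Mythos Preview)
Your proposal is correct and follows essentially the same approach as the paper: the paper does not spell out a separate proof but simply notes that the corollary is inferred from Theorem~\ref{theorem:fc_eq} and its proof, exactly via the dichotomy you describe (no new errors $\Rightarrow FC_b=FC_{b-1}$; some error tagged $b$ $\Rightarrow FC_b>FC_{b-1}$). Your write-up just makes this explicit, so there is nothing materially different to compare.
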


\subsection{FC and Key Search Progress}\label{sec:implications}

The $b$-depth functional corruptibility can be related to the set $\overline{K}_b$ of incorrect keys that are excluded by $SAT(b)$. 
To establish this relation, we first provide a characterization of  $\overline{K}_b$ using the following lemma. 

\begin{lemma}\label{lem:dip}
Let $\overline{K}_b$ be the set of wrong keys pruned out of the search space upon termination of $SAT(b)$. Then, $ \forall \ i\in\mathbb{B}^{b|I|}$, the following holds:  
$$\{k|f_b(i)\neq f_b'(i,k)\}\subseteq \overline{K}_b.$$
\end{lemma}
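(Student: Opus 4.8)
The plan is to unfold the definition of the SAT attack's pruning behavior and show that any distinguishing input pattern it uses, together with the oracle response, certifies the exclusion of every key that disagrees with the oracle on that input; then observe that the final UNSAT condition forces \emph{every} such key to have been excluded regardless of which DIPs were actually chosen. First I would fix an arbitrary input $i \in \mathbb{B}^{b|I|}$ and an arbitrary wrong key $k$ with $f_b(i) \neq f_b'(i,k)$, and aim to show $k \in \overline{K}_b$. The natural route is by contradiction: suppose $k \notin \overline{K}_b$, i.e., $k$ survives in the remaining key set upon termination of $SAT(b)$. Recall that $SAT(b)$ terminates precisely when its accumulated SAT instance — the miter of two key-copies of $C^b_e$ constrained by all collected oracle input/output clauses — becomes unsatisfiable.

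The key step is to exhibit a satisfying assignment of that final SAT instance, contradicting unsatisfiability. I would take the first key-copy to be $k$ (a surviving key), the second key-copy to be any \emph{correct} key $k^*$ (which also survives, since correct keys are never pruned — this should be recalled or noted as a standing assumption of the attack), and the input to be $i$. With this assignment, the two output copies are $f_b'(i,k)$ and $f_b'(i,k^*) = f_b(i)$; since $f_b(i) \neq f_b'(i,k)$ by assumption, the miter's disequality constraint is satisfied. It remains to check that the appended oracle clauses are also satisfied: each such clause fixes, for some previously found DIP $i^{dip}$, the value of both key-copies' outputs to the oracle output $f_b(i^{dip})$. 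Here the second copy satisfies these automatically (it uses the correct key), and the first copy satisfies them because $k$ survived $SAT(b)$, which exactly means $k$ was never pruned by any collected DIP, i.e., $f_b'(i^{dip},k) = f_b(i^{dip})$ for every collected $i^{dip}$. Hence the full instance is satisfiable, contradicting termination; therefore $k \in \overline{K}_b$.

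The main obstacle I anticipate is making precise the claim that a surviving key necessarily matches the oracle on all collected DIPs — this is the characterization of ``pruned by the SAT attack'' that must be read off carefully from the attack mechanism described in Section~\ref{sec:sat_mechanism}: the appended clauses are exactly $\{f_b'(i^{dip}, k) = f_b(i^{dip})\}$ over both key copies, so any key violating one of them is ruled out of all subsequent satisfying assignments, while the argument needs the converse direction, that a non-pruned key violates none of them. A secondary subtlety is justifying the existence of a correct key $k^*$ in the surviving set to serve as the second copy; this rests on the soundness of the SAT attack (it never prunes a correct key, since a correct key produces the oracle output on every input, including every DIP), which I would state explicitly as the one background fact being invoked. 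Once these two points are pinned down, the contradiction argument is routine and the lemma follows.
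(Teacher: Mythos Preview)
Your proposal is correct and follows essentially the same contradiction argument as the paper: assume a wrong key $k$ with $f_b(i)\neq f_b'(i,k)$ survives, pair it with a correct key $k^*$ and input $i$ to satisfy the miter, and contradict termination of $SAT(b)$. The paper's version adds a superfluous case split on whether $i$ was already a DIP and is less explicit than you are about why the appended oracle clauses are satisfied by the surviving pair $(k,k^*)$; your unified treatment is slightly cleaner but not a different route.
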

\begin{proof}
We articulate the proof into two cases based on whether the input $i$ is explicitly found as a DIP by $SAT(b)$ (case 1) or not (case 2). 

\emph{Case 1.} By the definition in Section~\ref{sec:sat_mechanism}, $\{k|f_b(i)\neq f_b'(i,k)\}$ is the set of keys that are pruned out by the DIP $i$. Therefore, we have $\{k|f_b(i)\neq f_b'(i,k)\}\subseteq \overline{K}_b$. 

\emph{Case 2.} 
We denote by $k^*$ one of the correct keys. 
Suppose, by contradiction, that
there exists a wrong key $k'$ that does not belong to $\overline{K}_b$ while satisfying $f_b(i)\neq f_b'(i,k')$. From the circuit visualization of the SAT instance 
in Fig.~\ref{fig:miter}, if $K1$, $K2$, and $I$ are assigned with $k^*$, $k'$, and $i$, respectively, then the SAT instance is satisfied. By definition, this means that $i$ is a DIP that would be found by $SAT(b)$, which contradicts our initial assumption. Therefore, there is no such wrong key $k'$ that does not belong to $\overline{K}_b$, i.e., $\{k|f_b(i)\neq f_b'(i,k)\}\subseteq \overline{K}_b$ holds. 
\end{proof}

Lemma~\ref{lem:dip} shows that a key $k'$ is in $\overline{K}_b$ if $f_b(i)\neq f_b'(i,k')$ holds, even if $i$ is not explicitly found as a DIP by $SAT(b)$. 
The following theorems characterize the relation between $FC_b$ and $\overline{K}_{b}$, stating that  $\overline{K}_{b}$ monotonically increases with $FC_b$ and remains the same if $FC_b$ remains constant.   

\begin{theorem}\label{theorem:dip} 
For all $b>1$, let $FC_b = FC_{b-1}$ hold. 
Let $\overline{K}_{b-1}$ and $\overline{K}_{b}$ be the sets of wrong keys pruned out by $SAT(b-1)$ and $SAT(b)$, respectively. Then, $\overline{K}_{b-1}=\overline{K}_{b}$ holds. 
\end{theorem}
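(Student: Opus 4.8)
The plan is to reduce the statement to a set-builder characterization of the pruned-key sets and then verify the two inclusions $\overline{K}_{b-1}\subseteq\overline{K}_{b}$ and $\overline{K}_{b}\subseteq\overline{K}_{b-1}$ separately, with the hypothesis $FC_b=FC_{b-1}$ only needed for the second one (via Theorem~\ref{theorem:fc_eq} and the error-tag bookkeeping of Definition~\ref{def:inherit}).

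\emph{Step 1 (characterizing $\overline{K}_b$).} Upon termination of $SAT(b)$, every pruned key was removed by some DIP, and by the mechanism recalled in Section~\ref{sec:sat_mechanism} a DIP $i$ removes exactly $\{k\mid f_b(i)\neq f_b'(i,k)\}$; hence $\overline{K}_b$ is contained in $\{k\mid \exists\, i\in\mathbb{B}^{b|I|},\ f_b(i)\neq f_b'(i,k)\}$. Lemma~\ref{lem:dip} gives the reverse inclusion, so
\[
\overline{K}_b=\{k\in\mathbb{B}^{t_k|I|}\mid \exists\, i\in\mathbb{B}^{b|I|},\ f_b(i)\neq f_b'(i,k)\},
\]
and likewise for $\overline{K}_{b-1}$ with $b$ replaced by $b-1$. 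It therefore suffices to show that a key $k$ has a ``witness input'' of length $b$ if and only if it has one of length $b-1$.

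\emph{Step 2 ($\overline{K}_{b-1}\subseteq\overline{K}_{b}$).} If $j\in\mathbb{B}^{(b-1)|I|}$ witnesses $k\in\overline{K}_{b-1}$, pick any $i\in\mathbb{B}^{b|I|}$ with $j\prec i$. Since the first $b-1$ output blocks of the depth-$b$ unrolled circuits coincide with the depth-$(b-1)$ ones (the prefix consistency already used in Definition~\ref{def:inherit}), the first $b-1$ blocks of $f_b(i)$ and $f_b'(i,k)$ equal $f_{b-1}(j)$ and $f_{b-1}'(j,k)$, which differ; hence $f_b(i)\neq f_b'(i,k)$ and $k\in\overline{K}_{b}$. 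This inclusion does not use the hypothesis.

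\emph{Step 3 ($\overline{K}_{b}\subseteq\overline{K}_{b-1}$).} Let $i\in\mathbb{B}^{b|I|}$ witness $k\in\overline{K}_{b}$, so $f_b'(i,k)\neq f_b(i)$ and $\mathcal{T}_b(i,k)\neq\bot$. Because $FC_b=FC_{b-1}$, Theorem~\ref{theorem:fc_eq} says $f_b'$ introduces no error tagged $b$, i.e.\ $\mathcal{T}_b(i,k)\neq b$. As $b>1$, the only branch of Definition~\ref{def:inherit} that yields a value other than $\bot$ and $b$ is the last (``otherwise'') one, which is reached precisely when $f_{b-1}'(j,k)\neq f_{b-1}(j)$ for the unique prefix $j\prec i$ with $j\in\mathbb{B}^{(b-1)|I|}$. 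Thus $j$ witnesses $k\in\overline{K}_{b-1}$ by Lemma~\ref{lem:dip} at depth $b-1$. Combining Steps~2 and~3 gives $\overline{K}_{b}=\overline{K}_{b-1}$.

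The conceptual content is entirely supplied by Theorem~\ref{theorem:fc_eq} and Lemma~\ref{lem:dip}; the only points requiring care are bookkeeping. First, one must justify the exact set-builder form of $\overline{K}_b$ in Step~1 by sandwiching ``keys killed by some actual DIP'' between the Lemma~\ref{lem:dip} lower bound and the ``killed by some input'' upper bound. Second, the case analysis on Definition~\ref{def:inherit} in Step~3 must be checked to be exhaustive, so that an output error at depth $b>1$ that is not tagged $b$ necessarily descends from an output error already present at depth $b-1$. I expect Step~3, and specifically tying ``$FC_b=FC_{b-1}$'' to ``the error at $(i,k)$ came from a shorter witness,'' to be the main obstacle, but it is handled cleanly by reading off the ``otherwise'' clause of the tag definition once Theorem~\ref{theorem:fc_eq} has ruled out the tag-$b$ clause.
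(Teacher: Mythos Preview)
Your proof is correct and follows essentially the same route as the paper: both directions rest on Lemma~\ref{lem:dip}, with the $\overline{K}_b\subseteq\overline{K}_{b-1}$ inclusion obtained by invoking Theorem~\ref{theorem:fc_eq} to rule out tag-$b$ errors and then descending to a depth-$(b-1)$ witness via Definition~\ref{def:inherit}. The only organizational difference is that you first isolate the explicit characterization $\overline{K}_b=\{k\mid \exists\,i,\ f_b(i)\neq f_b'(i,k)\}$ and then argue key-by-key, whereas the paper argues DIP-by-DIP (taking each DIP of $SAT(b)$ and producing an input at depth $b-1$ that prunes the same block of keys, and vice versa); the underlying content is identical.
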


\begin{proof}
We first prove that $\overline{K}_{b-1}\supseteq \overline{K}_{b}$ holds. Let $i_{b}^{dip}$ be a DIP found by $SAT(b)$. 
The set of wrong keys that can be pruned out by this DIP is  
$$\overline{K}=\{k|f_b'(i_b^{dip},k)\neq f_b(i_b^{dip})\}.$$
Since $FC_b = FC_{b-1}$, the tags of all errors introduced by $f_b'$ are less than $b$
according to Theorem~\ref{theorem:fc_eq}. Therefore, by Definition~\ref{def:inherit}, there exists an input $i_{b-1}\in \mathbb{B}^{(b-1)|I|}$ such that $i_{b-1}\prec i_{b}^{dip}$ and the following condition holds: 
$$f_{b-1}'(i_{b-1},k)\neq f_{b-1}(i_{b-1}),\ \forall k\in \overline{K}.$$
By Lemma~\ref{lem:dip}, 
$SAT(b-1)$ can prune out the wrong keys in $\overline{K}$. 
Therefore, $SAT(b-1)$ can prune out all the wrong keys pruned by $SAT(b)$, hence $\overline{K}_{b-1}\supseteq \overline{K}_{b}$ holds. 

We now prove that $\overline{K}_{b-1}\subseteq \overline{K}_{b}$ holds. Suppose $i_{b-1}^{dip}$ is a DIP found by $SAT(b-1)$. The set of wrong keys that can be pruned out by this DIP is 
$$\overline{K} = \{k|f_{b-1}'(i_{b-1}^{dip},k)\neq f_{b-1}(i_{b-1}^{dip})\}.$$
For any input $i_b$, such that $i_{b-1}^{dip}\prec i_b$, we have 
$$f_{b}'(i_b,k)\neq f_{b}(i_b),\ \forall k\in \overline{K}.$$
By Lemma~\ref{lem:dip}, 
$SAT(b)$ can prune out the wrong keys in $\overline{K}$. 
Therefore, $SAT(b)$ can prune out all the wrong keys pruned by $SAT(b-1)$, i.e., $\overline{K}_{b-1}\subseteq \overline{K}_{b}$ also holds. 
\end{proof}

\begin{theorem}\label{theorem:better_dip}
For all $b>1$, let $FC_b > FC_{b-1}$ hold, and let $\overline{K}_{b-1}$ and $\overline{K}_{b}$ be the sets of wrong keys pruned out by $SAT(b-1)$ and $SAT(b)$, respectively. Then, $\overline{K}_{b-1}\subseteq \nobreak \overline{K}_{b}$ holds. 
\end{theorem}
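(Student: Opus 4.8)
The plan is to recognize that this is the ``easy'' inclusion — the statement that deepening the unrolling can never cause the SAT attack to recover fewer pruned keys — and that it in fact holds regardless of how $FC_b$ compares to $FC_{b-1}$. Indeed, the second half of the proof of Theorem~\ref{theorem:dip}, where $\overline{K}_{b-1}\subseteq\overline{K}_b$ is shown, never invokes the hypothesis $FC_b=FC_{b-1}$, so the same argument transfers directly. I would therefore reuse that argument almost verbatim, adapting the wording to the present hypothesis.

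Concretely, first I would record the structural fact underlying everything: the depth-$b$ unrolling $C^b_{e}$ (resp.\ $C^b_{o}$) contains the depth-$(b-1)$ unrolling $C^{b-1}_{e}$ (resp.\ $C^{b-1}_{o}$) as its first $b-1$ stages, so for any $i_b\in\mathbb{B}^{b|I|}$ with prefix $i_{b-1}\prec i_b$ the first $(b-1)|O|$ output bits of $f_b(i_b)$ equal $f_{b-1}(i_{b-1})$, and likewise $f_b'(i_b,k)$ restricted to its first $(b-1)|O|$ bits equals $f_{b-1}'(i_{b-1},k)$. Hence an output discrepancy at depth $b-1$ persists at depth $b$: if $f_{b-1}'(i_{b-1},k)\neq f_{b-1}(i_{b-1})$, then $f_b'(i_b,k)\neq f_b(i_b)$ for every extension $i_b\succ i_{b-1}$.

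Next I would lift this to pruned-key sets. Let $i_{b-1}^{dip}$ be any DIP found during the run of $SAT(b-1)$; the keys it eliminates are $\overline{K}=\{k\mid f_{b-1}'(i_{b-1}^{dip},k)\neq f_{b-1}(i_{b-1}^{dip})\}$. Picking any $i_b$ with $i_{b-1}^{dip}\prec i_b$, the persistence fact gives $f_b'(i_b,k)\neq f_b(i_b)$ for all $k\in\overline{K}$, so Lemma~\ref{lem:dip} yields $\overline{K}\subseteq\overline{K}_b$. Since $\overline{K}_{b-1}$ is exactly the union of these elimination sets over all DIPs found by $SAT(b-1)$ — equivalently, $\overline{K}_{b-1}=\{k\mid\exists\, i\in\mathbb{B}^{(b-1)|I|},\ f_{b-1}'(i,k)\neq f_{b-1}(i)\}$, the complement of the set of keys functionally equivalent to the oracle over depth $b-1$ — taking the union over all such DIPs gives $\overline{K}_{b-1}\subseteq\overline{K}_b$, as claimed. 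I would also remark that the hypothesis $FC_b>FC_{b-1}$ plays no role in the argument and is stated only to contrast with Theorem~\ref{theorem:dip}, where the stronger conclusion $\overline{K}_{b-1}=\overline{K}_b$ additionally requires $FC_b=FC_{b-1}$.

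There is no real obstacle here; the only points demanding a line of justification are the persistence claim (immediate from the unrolling construction of Fig.~\ref{fig:unroll}) and the identification of $\overline{K}_{b-1}$ with the union of the per-DIP elimination sets (immediate from how the SAT attack accumulates clauses and terminates in an unsatisfiable instance). Everything else is a direct application of Lemma~\ref{lem:dip}.
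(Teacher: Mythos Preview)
Your proposal is correct. As you observe, the inclusion $\overline{K}_{b-1}\subseteq\overline{K}_b$ is the ``easy'' direction already proved in the second half of Theorem~\ref{theorem:dip} without ever touching the hypothesis $FC_b=FC_{b-1}$; transplanting that argument verbatim is the shortest path, and your justification of the persistence claim and of the characterization of $\overline{K}_{b-1}$ via Lemma~\ref{lem:dip} is exactly what is needed.

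The paper takes a different route: instead of starting from a DIP of $SAT(b-1)$ and pushing it forward, it starts from an arbitrary $k\in\overline{K}_b$ and splits on whether some associated error has tag strictly less than $b$ (Case~1) or all associated errors have tag $b$ (Case~2). Case~1 recovers that $k\in\overline{K}_{b-1}$; Case~2 shows $k\notin\overline{K}_{b-1}$. From this partition the paper concludes $\overline{K}_{b-1}\subseteq\overline{K}_b$. This presentation is more roundabout for the stated inclusion --- it argues, in effect, that the only way to land in $\overline{K}_{b-1}$ is via Case~1, and Case~1 keys are already in $\overline{K}_b$ --- but it buys something your argument does not make explicit: Case~2 exhibits precisely how the inclusion can be \emph{strict} when $FC_b>FC_{b-1}$, which is the operational point the paper wants (unrolling further can prune more keys). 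Your remark that the hypothesis plays no formal role in the inclusion itself is accurate; the paper uses it only to motivate the existence of Case~2 keys.
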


\begin{proof}
Since $FC_b > FC_{b-1}$ holds, $f_b'$ will also introduce new errors, with tags equal to $b$. 
For an arbitrary wrong key $k\in \overline{K}_{b}$, we consider the following two cases based on whether the following condition holds (case 1) or does not hold (case 2):
\begin{equation} \label{eq:split}
\exists \ i\in \mathbb{B}^{b|I|}: (f_b'(i,k)\neq f_b(i)) \wedge (\mathcal{T}_b(i,k)<b),
\end{equation}
i.e., 
whether there exists at least one error introduced by $f_b'$ whose tag is less than $b$.

\emph{Case 1.}
By Theorem~\ref{theorem:dip}, there exists an input $i_{b-1}\in \mathbb{B}^{(b-1)|I|}$ such that
$f_{b-1}'(i_{b-1},k)\neq f_{b-1}(i_{b-1})$ holds. Therefore, by Lemma~\ref{lem:dip}, the wrong key $k$ is also in $\overline{K}_{b-1}$. 

\emph{Case 2.} 
Since $k\in \overline{K}_{b}$, the associated errors must all have tag $b$, i.e., 
$$ \mathcal{T}_b(i,k)=b, \forall i: f_b'(i,k)\neq f_b(i).$$
By 
Definition~\ref{def:inherit}, we also conclude:
$$f_{b-1}'(i_{b-1},k)= f_{b-1}(i_{b-1}),\ \forall \ i_{b-1}\in \mathbb{B}^{(b-1)|I|},$$
i.e., $k$ is not in $\overline{K}_{b-1}$.  

Based on whether all the wrong keys in $\overline{K}_{b}$ satisfy~\eqref{eq:split} or not, we conclude that $\overline{K}_{b}$ can be at least equal to and possibly a superset of $\overline{K}_{b-1}$. Therefore,  
$\overline{K}_{b-1}\subseteq \overline{K}_{b}$ holds. 
\end{proof}

Theorem~\ref{theorem:dip} and Theorem~\ref{theorem:better_dip} directly relate the behavior of  $\overline{K}_{b}$, hence the progress made by the attack, to the behavior of $FC_{b}$, which can be efficiently approximated via logic simulation. Therefore, by efficiently estimating the sequence $FC_b$, we can look ahead and make informed predictions about whether a SAT-attack instance or a model checking instance need to be solved or can be skipped to rapidly progress with the attack. 

Specifically, if we find that $FC_b = FC_{b-1}$ holds, then we infer that it is sufficient to execute $SAT(b-1)$, which is a smaller and usually faster instance than $SAT(b)$. In fact, we know from Theorem~\ref{theorem:dip} that $SAT(b)$ cannot exclude any more wrong keys other than those already pruned out by $SAT(b-1)$. On the other hand, 
when $FC_b > FC_{b-1}$ holds, we infer that $SAT(b)$ is at least as effective as $SAT(b-1)$, or even more effective,  in narrowing down the search for the correct key. We can then move forward and directly execute $SAT(b)$ by skipping the execution of $SAT(b-1)$ and the additional model checking problem needed to verify termination on $C^{b-1}_{e}$. We incorporate both of these insights into \texttt{Fun-SAT}, as detailed below.    

\section{Functional Corruptibility-Guided SAT-Based Attack}\label{sec:attack}

We detail the attack flow of \texttt{Fun-SAT} and discuss its termination conditions.

\subsection{Attack Flow}

\begin{algorithm}[t]
 \caption{\texttt{Fun-SAT}}
 \begin{algorithmic}[1]\label{alg:attack_main}
 \renewcommand{\algorithmicrequire}{\textbf{Input:}}
 \renewcommand{\algorithmicensure}{\textbf{Output:}}
\REQUIRE Encrypted netlist $C_{e}$, oracle $C_{o}$, key sequence length $t_k$, FC analysis window $t_{win}$, FC difference threshold $\delta$, FC hold threshold $\Delta$, simulation sample size $S$
\ENSURE  Correct key sequence $k^*$

\STATE $b_l = 1;\ b_u=t_{win}$
\WHILE{\textbf{True}}
    \STATE // ****** FC analysis phase ******
    \STATE $counter = 0$
    \FOR{$b=b_l$ to $b_u$}
        \STATE $FC_b =simulate(C_{e}, C_{o}, t_k, b, S)$
        \IF{$b > 1$}
            \IF{$FC_b-FC_{b-1}\leq \delta$}
                \STATE $counter = counter + 1$
            \ELSE
                \STATE $counter = 0$
            \ENDIF
        \ENDIF
        \STATE $b^*=b$
        \IF{$counter == \Delta$}
            \STATE $b^*=b^*-\Delta$
            \STATE break
        \ENDIF
    \ENDFOR
    \STATE // ****** SAT attack phase ******
    \STATE $k^*, L_{dip}, L_{odip}=sat\_attack(C_{e}, C_{o}, t_k, b^*)$
    \IF{$!key\_verify(k^*, L_{dip}, L_{odip})$}
        \STATE $b_l = b^* + 1;\ b_u = b^* + 1 + t_{win}$
    \ELSE 
        \STATE break
    \ENDIF
\ENDWHILE
 \RETURN $k^*$
 \end{algorithmic} 
 \end{algorithm}

As shown in Algorithm~\ref{alg:attack_main}, \texttt{Fun-SAT} accepts as inputs a set of circuit-related and attack configuration parameters. Circuit-related inputs consist of the encrypted netlist $C_{e}$, the oracle $C_{o}$, and the key sequence length $t_k$. Configuration parameters include the FC analysis window $t_{win}$, the FC difference threshold $\delta$, the FC hold threshold $\Delta$, and the simulation sample size $S$, further described below. 

The attack consists of an FC analysis phase (line 3 to line 19) and a SAT attack phase (line 20 to line 26). 
The first phase analyzes a sequence of FC values for different unrolling depths $b$ 
and predicts the number of unrollings to be used in the second phase. 
To start the first phase, we specify the initial range of $b$, i.e., the number of unrollings excluding the key length $t_k$, to be $[1, t_{win}]$ in line 1.  
$FC_b$ is obtained via a logic simulation function in line 6, which simulates both $C_{e}$ and $C_{o}$ with random inputs and keys for $S$ times and calculates an estimate of $FC_b$ based on Definition~\ref{def:fc_seq}. 

\begin{figure}[t]
    \centering
    \includegraphics[width=0.95\columnwidth]{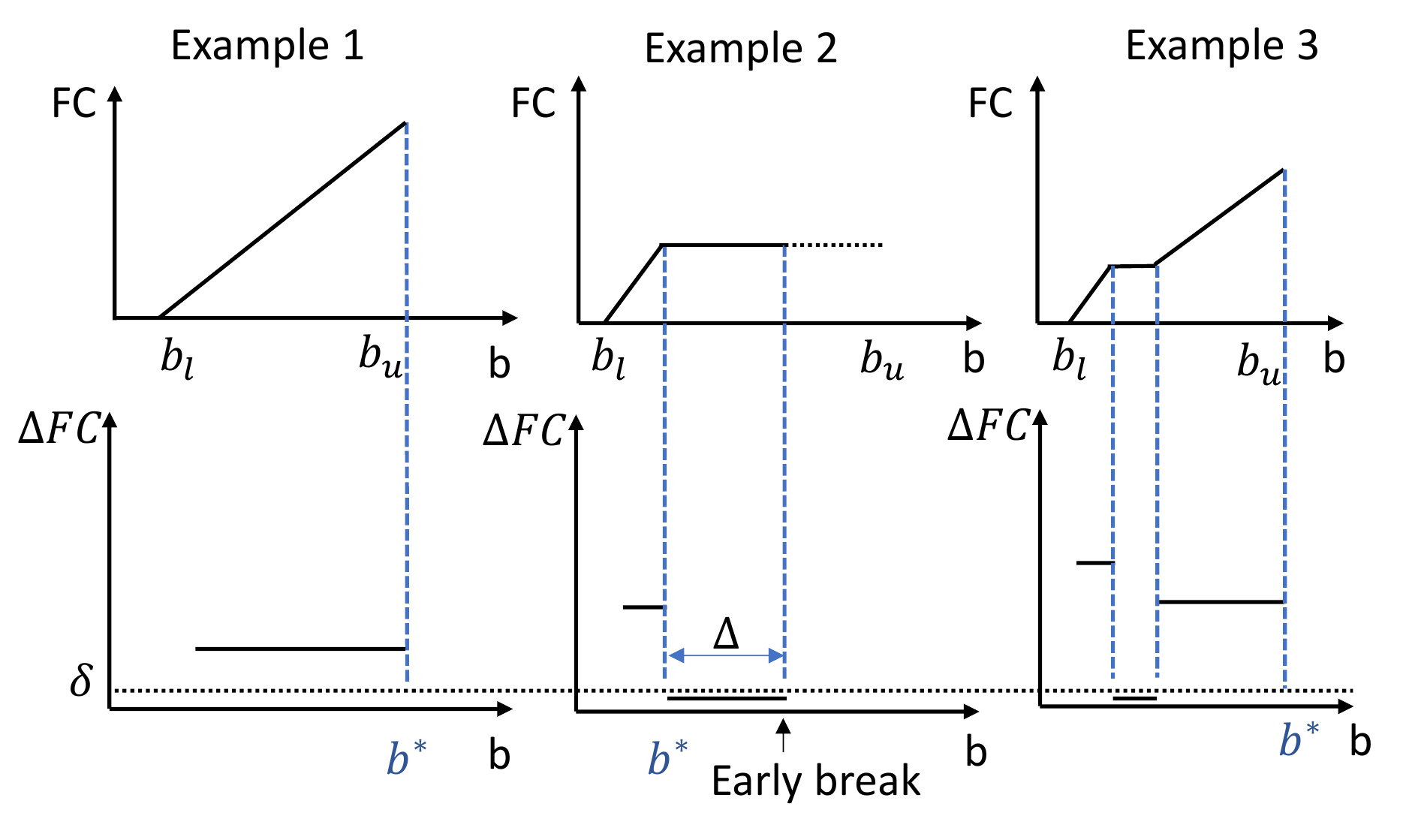}
    \vspace{-5mm}
    \caption{Sample behaviors of $FC$ as a function of $b$ ($\Delta FC_b = FC_b - FC_{b-1}$).}
    \vspace{-5mm}
    \label{fig:FC_2}
\end{figure}

As discussed in Section~\ref{sec:implications}, if $FC_b > FC_{b-1}$ holds for some $b$, then we pick $b$ as the number of unrollings for the second phase (based on Theorem~\ref{theorem:better_dip}). Otherwise, if $FC_b = FC_{b-1}$ holds, we use $b-1$ (based on Theorem~\ref{theorem:dip}). The FC analysis step (line 7 to line 18) implements this decision rule as follows. 
If $FC$ keeps increasing over the analysis window, we select the upper bound of the window $b_u$, as  shown in Example 1 in Fig.~\ref{fig:FC_2}.  
Otherwise, we relax the equality condition to accommodate approximation errors and check whether $(FC_b-FC_{b-1})\leq \delta$ holds, where $\delta$ is a small positive number (e.g., $0.01$). If this condition holds for very few occurrences before $FC$ increases again, as in Example~3 in Fig.~\ref{fig:FC_2}, 
we count these occurrences and terminate the first phase with an early break 
only when the number of successive occurrences exceeds a pre-determined threshold $\Delta$. 
This early break prevents unnecessary FC simulations when $FC$ stops increasing at an early stage, as in  Example~2 in Fig.~\ref{fig:FC_2}. 

In the second phase, the SAT attack is executed (line 21) with the number of unrollings $b^*$ decided in the first phase. However, 
the resulting candidate key $k^*$ can only guarantee the correct behavior of the sequential circuit up to $b^*$ clock cycles after reset. Therefore, an additional key verification step is needed to check whether $k^*$ is indeed the correct key (line 22). The candidate key $k^*$, the list of found DIPs $L_{dip}$ and the list of the corresponding correct output values 
$L_{odip}$ are used in this step, described in  Section~\ref{sec:termination_condition}.  
When the key verification fails, we revisit the first phase and search for a new unrolling depth value with a new FC analysis window 
(line 23). 
This loop terminates when a correct key is found.

\subsection{Termination Conditions}\label{sec:termination_condition}

The last step before the successful termination of the attack in Algorithm~\ref{alg:attack_main} is the key verification function (line 22). In this step, because the netlist of the oracle is assumed unavailable in the attack model,
we cannot directly check
the equivalence of the oracle $C_{o}$ and the encrypted netlist $C_{e}$ configured with $k^*$. 
We therefore build on the literature on SAT-based attacks~\cite{el2017reverse} to formulate  
two key verification conditions, namely, unique key (UK) and model checking equivalence (MCE),
and conclude on the correctness of a candidate key for a circuit encrypted via sequential logic encryption.

\fakeparagraphnospace{Unique Key} If $k^*$ is the only remaining key that makes the encrypted circuit $C_{e}$ behave the same as the oracle $C_{o}$ for all DIPs in $L_{dip}$, then $k^*$ is the correct key.
We check this condition by verifying that the set 
$$\{k'|f'_b(i,k')=f'_b(i,k^*),\forall i \in L_{dip} \}\backslash\{k^*\}$$
is empty. To do so, we construct 
a SAT instance as shown in Fig.~\ref{fig:uk_model}, where $C_{e}^b$ is the $b$-unrolled version of $C_{e}$ and $L_{dip}[n]$ is the $n$-th DIP in $L_{dip}$. When this SAT instance is UNSAT, the UK condition is satisfied and the attack terminates successfully with the correct key $k^*$. 

\begin{figure}[t]
\centerline{\includegraphics[width=0.9\columnwidth]{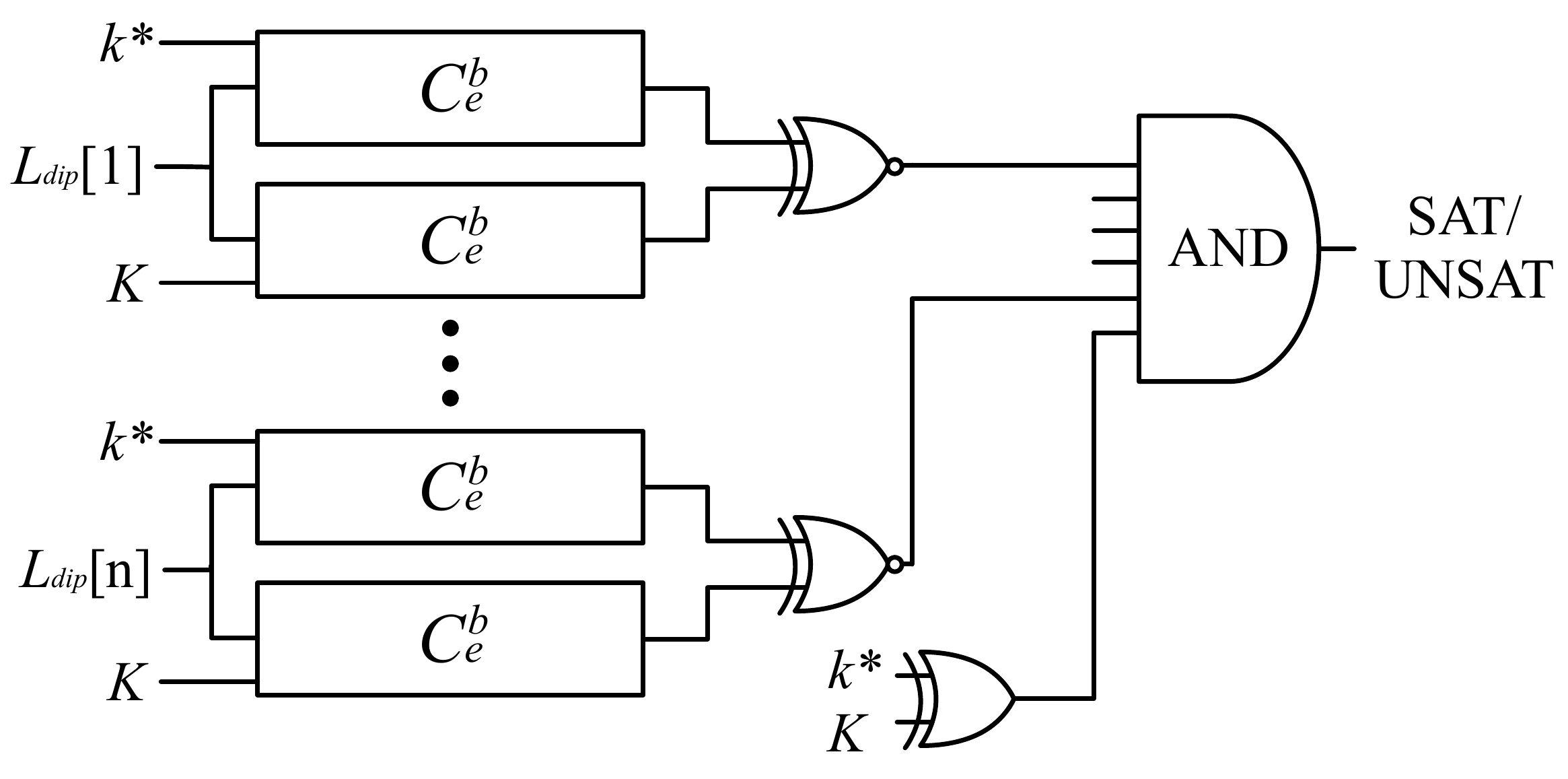}}
\caption{Circuit visualization of the SAT instance to check the unique key (UK) condition.}
\vspace{-5mm}
\label{fig:uk_model}
\end{figure}

\fakeparagraphnospace{Model Checking Equivalence} 
The SAT attack in Algorithm~\ref{alg:attack_main} often returns a set of candidate keys that match the behavior of the oracle over the first $b$ clock cycles, but may eventually lead to different outputs over longer horizons. 
We detect these spurious keys by leveraging model checking to search whether there exist two keys in the candidate set that lead to different outputs on the encrypted circuit $C_{e}$. 

\begin{figure*}[t]
\centerline{\includegraphics[width=0.7\textwidth]{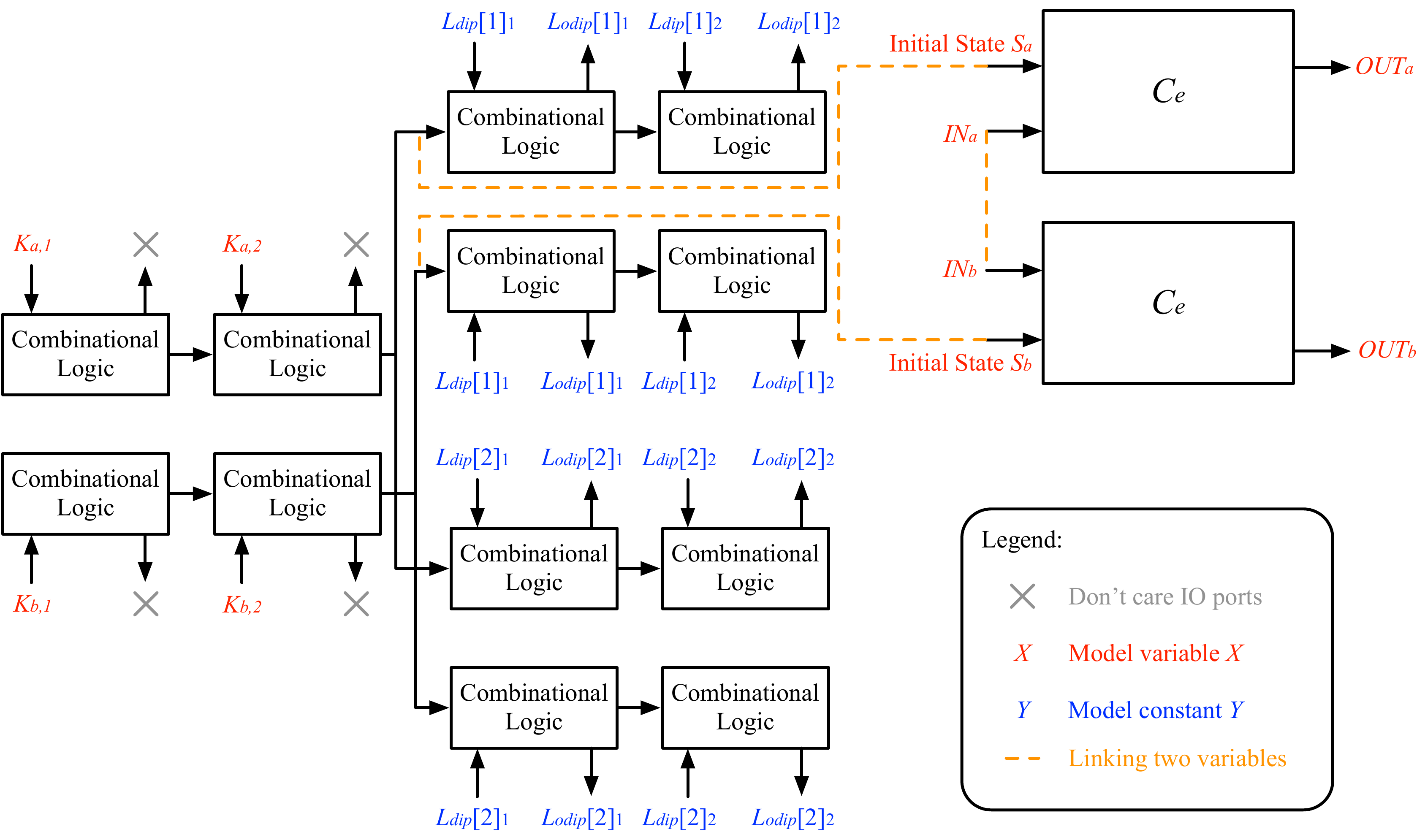}}
\caption{The visualization of the model construction for checking whether the remaining keys are all correct.}
\label{fig:mc_model}
\vspace{-5mm}
\end{figure*}

Fig.~\ref{fig:mc_model} shows the circuit visualization of an example model used for this task, where $t_k=b=|L_{dip}|=|L_{odip}|=2$. For better illustration purpose, we represent the unrolled circuit $C_{e}^2$ as a cascade of $t_k+b$, i.e., four copies of the combinational logic of $C_{e}$.  The IO ports in red, blue, or grey, are treated by the model checker as free binary variables, constant values, or `don't care' bits, respectively.
$K_a$ and $K_b$ are the two keys. $K_{a,1}$ and $K_{a,2}$ represent the portion of $K_a$ in the first and the second clock cycle, respectively. Similar subscript indices are used for $K_b$, the DIPs in $L_{dip}$, and the corresponding correct outputs in $L_{odip}$.  
The circuits on the left side of Fig.~\ref{fig:mc_model}, whose inputs and outputs are assigned all the DIPs and the corresponding correct outputs, are used to model the constraints that limit the search space for $K_a$ and $K_b$ only to the candidate set.  
On the right side of Fig.~\ref{fig:mc_model}, two copies of the encrypted circuit $C_{e}$ are instantiated and forced to receive the same inputs at $IN_a$ and $IN_b$, as denoted by the orange dashed lines linking the two ports. Importantly,  the initial states $S_a$ and $S_b$ of the two copies of $C_{e}$ are also set to the corresponding initial states obtained after applying $K_a$ or $K_b$ to the left-side circuit, respectively. 
These constraints ensure that the two copies of $C_{e}$ on the right side are configured as if they were provided with $K_a$ or $K_b$. Such a construction was not used in previous  work~\cite{el2017reverse,shamsi2019kc2}, since $K_a$ and $K_b$ could be directly provided to $C_{e}$ via additional key ports.   

The model above is given to a model checker to verify whether $OUT_a=OUT_b$ holds for 
an unbounded horizon. 
If the model checker returns false, it means there exists at least one key in the candidate set that eventually leads to the wrong circuit behavior on $C_{e}$. 
Therefore, the attack may not terminate. 
In our implementation, we first perform bounded model checking (BMC) with bound $b+1$ as a preliminary check. If BMC returns false, we can already conclude that the SAT-based attack should continue. Otherwise, we 
need to perform unbounded model checking to verify the correctness of the candidate key set beyond $b+1$ cycles. The outcome of unbounded model checking
will determine whether to conclude or continue the attack. 

\section{Experimental Results}\label{sec:experiment}

\fakeparagraphnospace{Experiment Setup}
\texttt{Fun-SAT} was implemented in Python and executed on a Linux server with 48 $2.1$-GHz cores and $500$-GB memory. 
In the FC analysis phase, we use Synopsys VCS to run the logic simulation on both the oracle and the encrypted netlist. 
The logic simulation time is deemed as a reasonable, if not conservative, approximation of the functional query time, since hardware execution is generally faster than software simulation.
The SAT attack is adapted from the literature~\cite{subramanyan2015evaluating} 
and leverages \textsc{MiniSAT}~\cite{sorensson2005minisat} as the SAT solver while model checking is performed using the interpolation-based  algorithm~\cite{mcmillan2003interpolation}  
implemented by the  \textsc{nuXmv}~\cite{cavada2014nuxmv}  model checker.  
The model is constructed as described in Section~\ref{sec:attack} and encoded into the \textsc{nuXmv} format by a Python script.
For the reference attack in Algorithm~\ref{alg:reference}, we first evaluate increasing the unrolling depth $b$ by one whenever the key verification fails. 
We then explore a multiplicative update rule, e.g., $b = 2\cdot b$, which can speed up both the reference and \texttt{Fun-SAT} attacks.
We select ten benchmarks from ISCAS'89~\cite{brglez1989combinational} and ITC'99~\cite{davidson1999itc}, as detailed in Table~\ref{tab:benchmark}, 
in line with the sizes of the benchmarks used in the related literature~\cite{shamsi2019kc2,el2017reverse,hu2020sanscrypt_extended}.

\begin{table}[t]
    \caption{Overview of the Selected Benchmark Circuits} 
    \vspace{-8mm}
    \begin{center}
    \resizebox{\columnwidth}{!}{
    \begin{tabular}{|c|c|c|c|c|c|c|c|c|c|c|}
    \hline
    \textbf{Circuit} & \textbf{s27} & \textbf{s526} & \textbf{s1488} & \textbf{s9234} & \textbf{s15850} & \textbf{s38584} & \textbf{b10} & \textbf{b12} & \textbf{b15} & \textbf{b19} \\\hline
    \textbf{Inputs} & 4 & 3 & 8 & 19 & 14 & 12 & 11 & 5 & 36 & 24 \\\hline
    \textbf{Outputs} & 1 & 6 & 19 & 22 & 87 & 278 & 6 & 6 & 70 & 30 \\\hline
    \textbf{DFFs} & 3 & 21 & 6 & 228 & 597 & 1452 & 17 & 121 & 449 & 6642 \\\hline
    \textbf{Gates} & 10 & 193 & 653 & 5597 & 9772 & 19253 & 214 & 1217 & 8169 & 190213 \\\hline
    \end{tabular}}
    \label{tab:benchmark}
    \end{center}
    \vspace{-8mm}
\end{table}

We implement the two sequential logic encryption methods, as described in the reference papers~\cite{chakraborty2009harpoon,desai2013interlocking}. 
Besides the key size $t_k$, there are other configuration parameters that are specific to the methods. 
In HARPOON, the Modification Kernel Function (MKF) module is inserted to corrupt the circuit function when the applied key is wrong. In our experiments, we randomly select the locations of 
the MKFs and use the ratio $R_{mkf}$ between the number of MKFs and the number of gates in the original circuit as a parameter to configure the encryption. 
In Interlocking, a set of wrong keys can also bring the circuit to 
the functional mode without triggering output errors immediately.  
These keys trigger output errors only when the circuit enters certain predetermined states in the functional mode. Based on the distance between one such state $s_{pre}$ and the true reset state in the functional mode, the occurrence of output errors may be delayed for a few cycles. 
In our experiments, we introduce the parameter $D_{max}$ to set 
the maximum number of state transitions between the true reset state in the functional mode and a state $s_{pre}$ in which output errors are triggered. A wrong key $k_w$ is randomly assigned to $s_{pre}$ such that an output error occurs in $s_{pre}$ when $k_w$ is applied. 

We determine the simulation sample size $S$ in Algorithm~\ref{alg:attack_main} using numerical experiments such as the one reported in Fig.~\ref{fig:pre_exp}, where the FC of the largest benchmark $s38584$, encrypted with a randomly selected Interlocking configuration, is simulated with different sample sizes. 
We choose $S=1000$ in the rest of this section, since it leads to a reasonable approximation error for the FC. The FC difference threshold $\delta$ and hold threshold $\Delta$ are instead set to $0.01$ and $5$, respectively, which provide sufficient accuracy to detect when the FC remains constant.
Higher values of $S$ and $\Delta$ and lower values of $\delta$ can improve the prediction accuracy of the minimum unrolling depth at the cost of increased simulation time.

\begin{figure}[t]
    \centering
    \subfigure[]{\includegraphics[width=0.58\columnwidth]{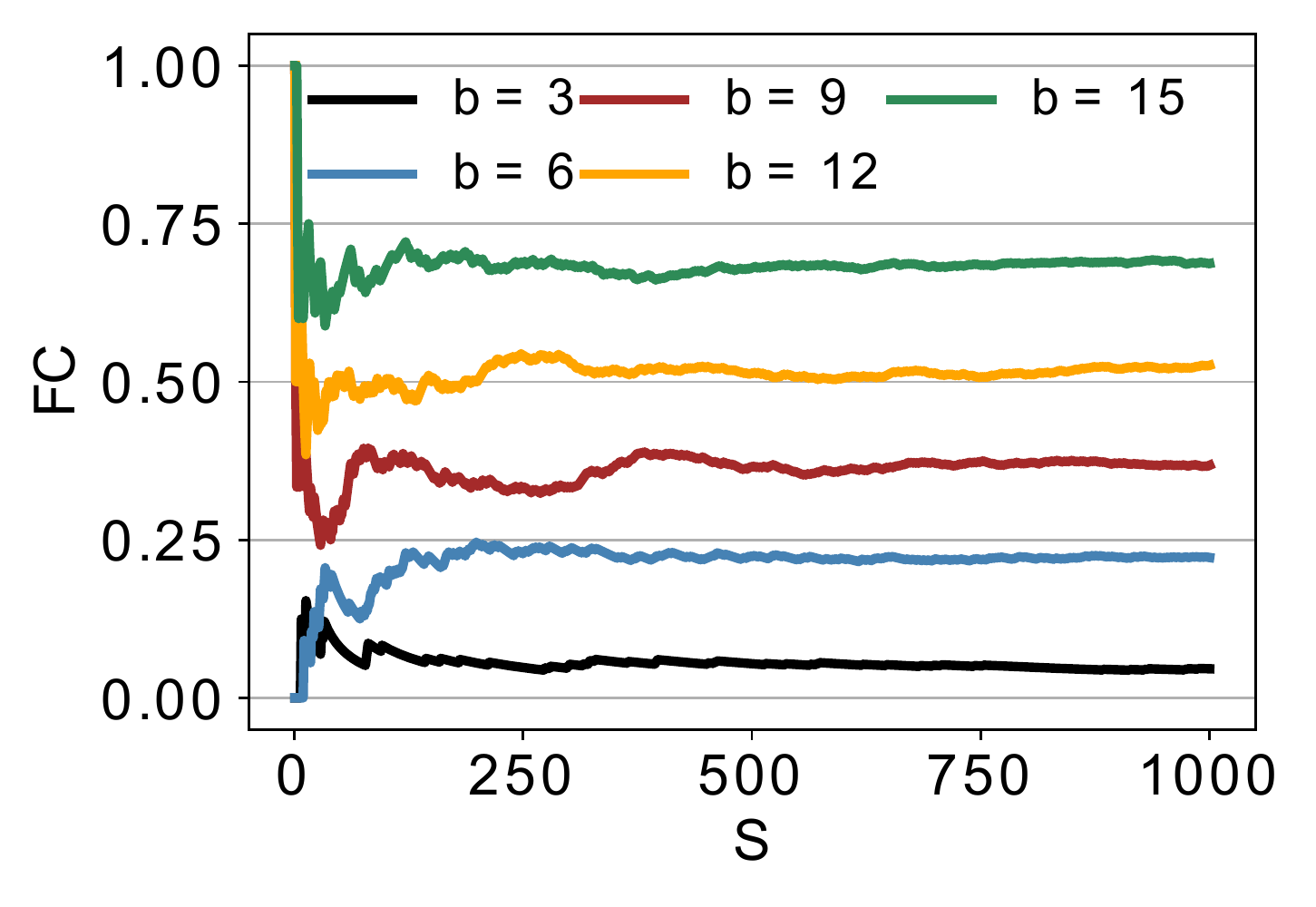}\label{fig:pre_exp}}
    \subfigure[]{\includegraphics[width=0.38\columnwidth]{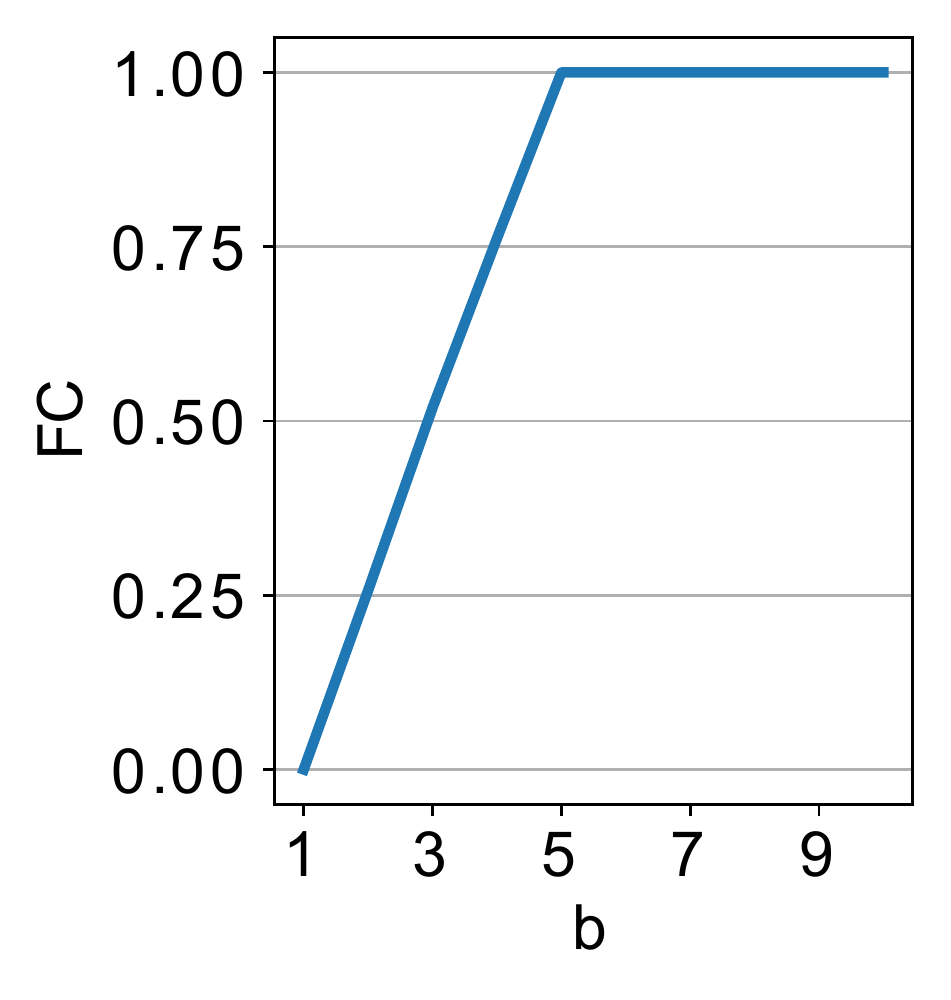}\label{fig:FC_case_study}}
    \caption{(a) FC vs. simulation sample size $S$ and (b) Behavior of FC as a function of the unrolling depth $b$ for an encrypted version of $s38584$.}
    \vspace{-7mm}
\end{figure}

\fakeparagraphnospace{Case Study}
We illustrate \texttt{Fun-SAT} on $s38584$, encrypted using Interlocking with $t_k=4$, corresponding to a key bit-length of 48, and $D_{max}=5$. Fig.~\ref{fig:FC_case_study} shows the behavior of $FC$ for the encrypted circuit as a function of the unrolling depth $b$. 
In the FC analysis phase, the FC is calculated by logic simulation starting from $b=1$. When $b$ is increased to $5$, the FC value starts to stabilize, i.e., the difference between two consecutive FC values is within $\delta$. After a number of additional FC simulations equal to $\Delta$, the first phase terminates by suggesting an unrolling depth of $5$. 
In the SAT-attack phase, \texttt{Fun-SAT} unrolls the encrypted circuit for $5$ cycles and execute the SAT attack on the unrolled circuit. The attack terminates successfully with a single key within $507$~s. 
In contrast, the reference attack in Algorithm~\ref{alg:reference}
starts by unrolling the circuit once ($b=1$) and performs $10$ more checks of the termination conditions (five UK and five MCE checks),  
which takes $2.64$~h and is
$18.7\times$ slower than \texttt{Fun-SAT}. 

\fakeparagraphnospace{Numerical Experiments on the Benchmark Circuits}
We encrypt the first nine selected benchmarks with HARPOON using different configuration parameters. The key size $t_k$ ranges from $1$ to $4$ while the MKF ratio $R_{mkf}$ is assigned four different values, namely, $5\%$, $10\%$, $15\%$, and $20\%$. We then apply both \texttt{Fun-SAT} and the reference attack in Algorithm~\ref{alg:reference}
on all the $144$ encrypted circuits. Fig.~\ref{fig:harpoon_hist} shows the distribution of the attack runtime.

\begin{figure}[t]
    \centering{
    \subfigure[]{
    \includegraphics[width=0.59\columnwidth]{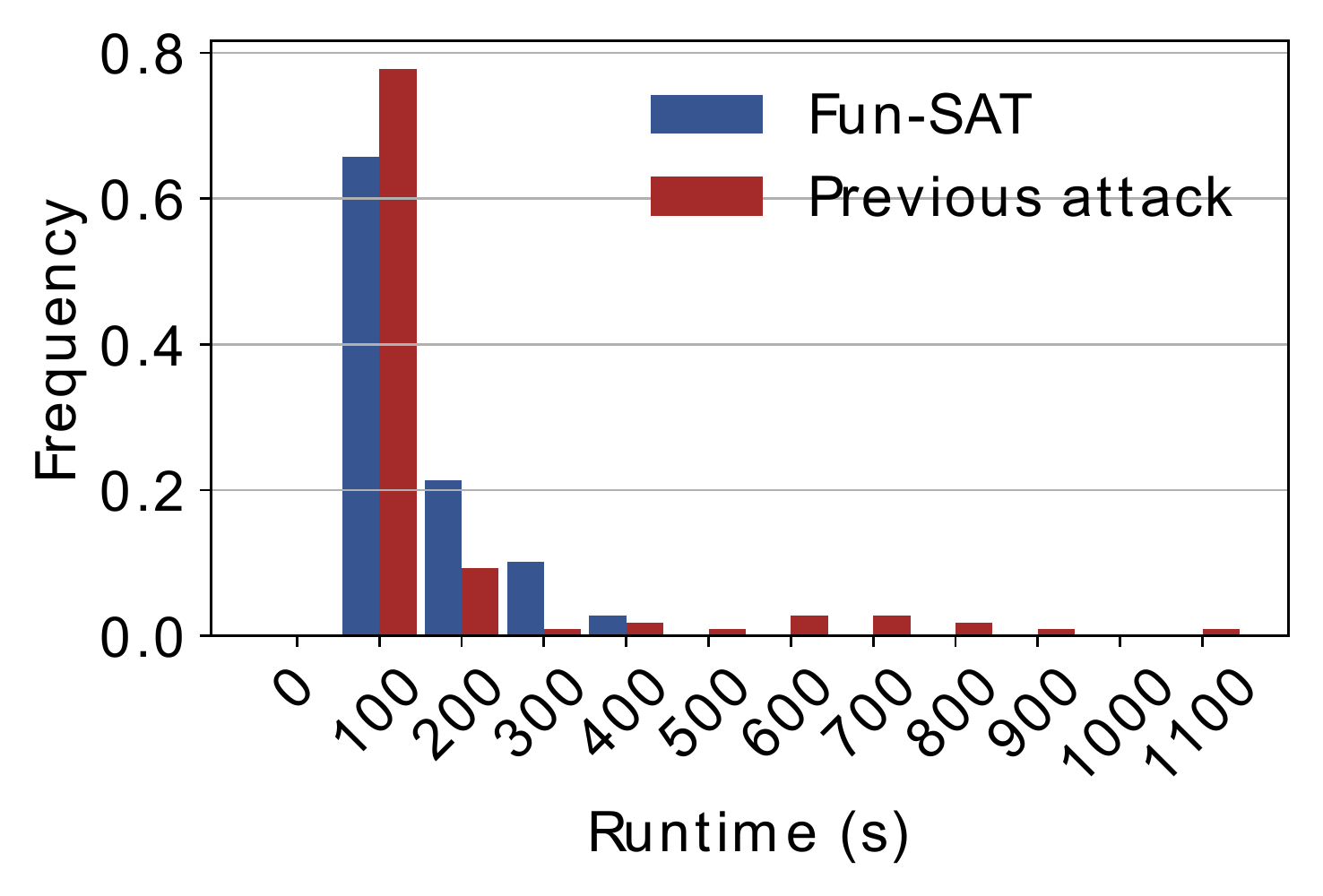}\label{fig:harpoon_hist}}
    \subfigure[]{
    \includegraphics[width=0.36\columnwidth]{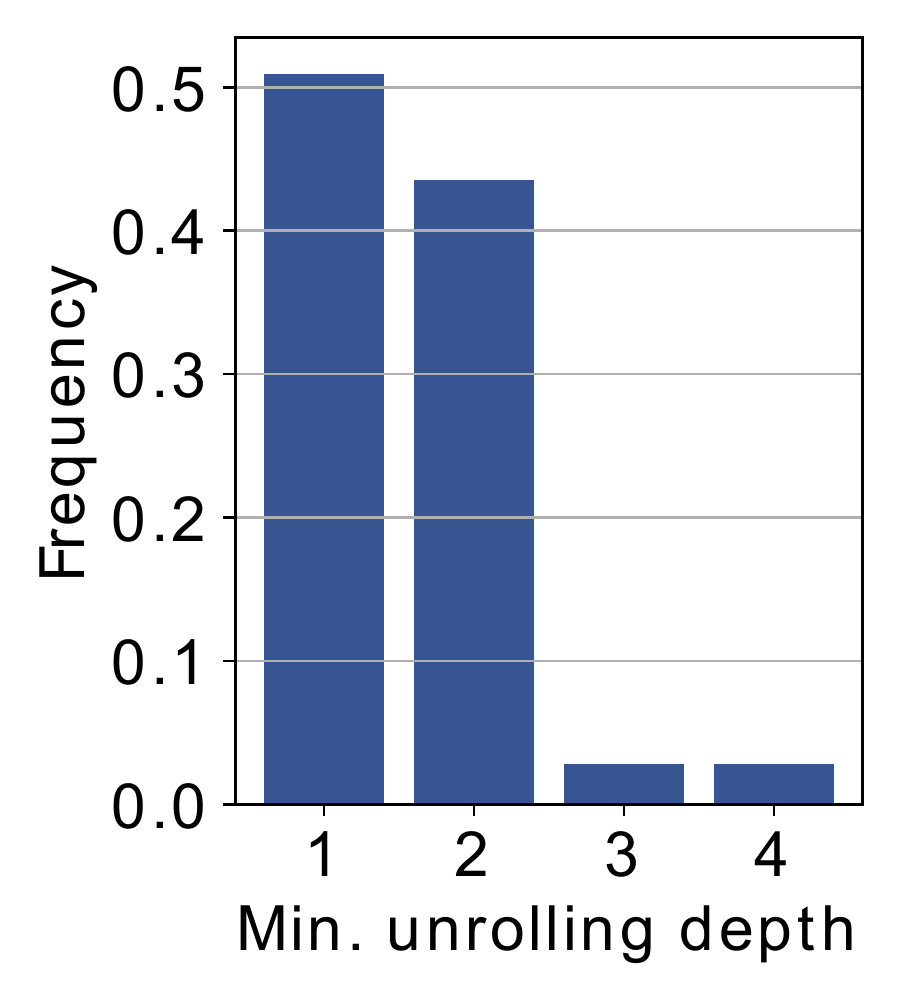}\label{fig:unroll_hist}}
    }
    \vspace{-5mm}
    \caption{Histograms of (a) the attack runtime and (b) the  required unrolling depth for the HARPOON-encrypted circuits.}
    \vspace{-5mm}
\end{figure}

\begin{figure*}[t]
  \centering
  \includegraphics[width=\textwidth]{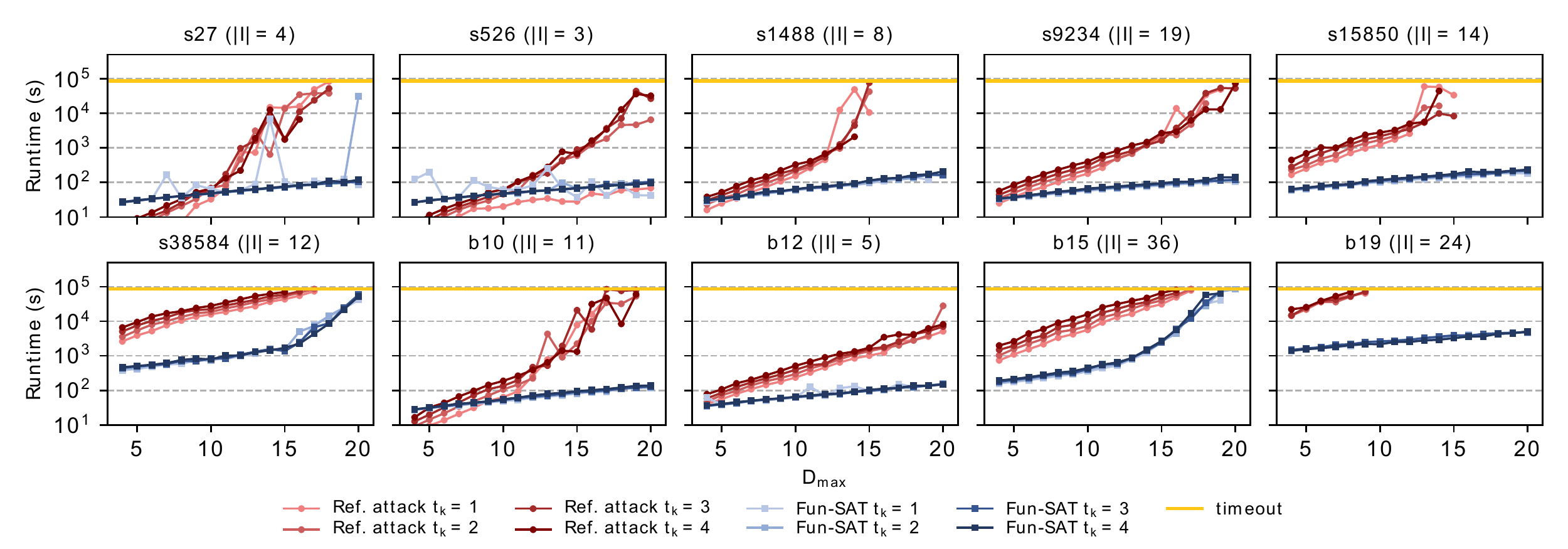}
  \vspace{-8mm}
  \caption{Numerical results for \texttt{Fun-SAT} on Interlocking. 
  (The key bit-length can be computed as $t_k|I|$.)
  }
  \label{fig:interlocking_exp}
  \vspace{-5mm}
\end{figure*}
All the attacks successfully terminate within $20$ minutes, with the average attack time for \texttt{Fun-SAT} and the reference attack being $83$~s and $97$~s, respectively. Such a small average attack runtime for both attacks is mostly due to the small number of unrolling cycles required by HARPOON-encrypted circuits, as shown in Fig.~\ref{fig:unroll_hist}, which are designed to achieve high  functional corruptibility.  
The benchmarks encrypted with Interlocking exhibit, instead, larger unrolling depth  and, therefore, higher resilience to the reference SAT-based attack. We execute the two attacks on circuits encrypted with Interlocking with a time-out threshold of one day. The range of the key size is the same as for the HARPOON configurations while $D_{max}$ ranges from $4$ to $20$. Fig.~\ref{fig:interlocking_exp} shows the attack runtime for all the ten benchmark circuits together with the highlighted time-out threshold. 
The runtime of the reference attack drastically increases with $D_{max}$ for all the benchmarks. Among all the encrypted netlists attacked by the reference attack, $19\%$ reach the time-out after one day and the rest have an average runtime of $7$~hours. In contrast, only $0.7\%$ of the \texttt{Fun-SAT} attempts reach the time-out and $91\%$ terminate successfully within $60$~minutes. On average, \texttt{Fun-SAT}  achieves $90\times$ faster execution than the reference attack whenever both the attacks do not reach time-out. 
Executing \texttt{Fun-SAT} and the reference attack with the $2\cdot b$ update rule on three representative benchmarks, $s1488$, $s15850$, and $s38584$, encrypted with the same Interlocking configuration, still shows   $62\times$, $76\times$, and $10\times$ faster execution than the reference attack, respectively.

\fakeparagraphnospace{Extensions of \texttt{Fun-SAT}} 
We have demonstrated the effectiveness of \texttt{Fun-SAT} in finding the correct key when the key length $t_k$ is known. In the case of an unknown $t_k$, we may still perform the attack by regarding the correct initial state of the circuit, encoded by the connections between the $t_k$-th and the $(t_k+1)$-th circuit replica in Fig.~\ref{fig:after_unroll}, as providing the \emph{effective key} inputs in the unrolled circuit, with no substantial modifications to the circuit representation. This new key effectively configures the correct initial state of the circuit. To access the correct functionality, an attacker can then modify the reset values of all the registers in $C_{e}$ according to the found key.

Recently, sequential encryption techniques~\cite{meade2017revisit,dofe2018novel} have also been proposed that add extra key ports to the encrypted circuit, forcing the users to provide the correct key even if the circuit is in the functional mode. Extending \texttt{Fun-SAT} to address these schemes by accordingly modeling the two types of keys is a possible direction for future research. 
Finally, while the focus of \texttt{Fun-SAT} is on boosting attack efficiency by replacing many, possibly expensive, SAT calls with less expensive logic simulations, the attack may be further improved by using more advanced SAT-attack  tactics, 
such as the incremental SAT-solving in KC2~\cite{shamsi2019kc2}.

\section{Conclusions}\label{sec:conclude}

We presented a functional corruptibility-guided SAT-based attack method that can efficiently estimate the minimum unrolling depth required for a successful SAT attack to prune out of the search space all the wrong keys. \texttt{Fun-SAT} achieves on average two orders of magnitude runtime improvement when compared with a previous reference attack and can effectively be used as a method for evaluating the security of existing sequential encryption schemes. 
Moreover,  \texttt{Fun-SAT} relies on a notion of functional corruptibility for sequential circuits and monotonicity properties that 
are independent of the specific encryption scheme. They can then be applied to accelerate  
other attack variants~\cite{el2017reverse,shamsi2019kc2} or circumvent newly developed encryption schemes~\cite{rezaei2021sequential} that aim to increase the required unrolling depth to achieve resilience against SAT-based attacks. 

\section*{Acknowledgments}
This work was supported in part by the Air Force Research Laboratory (AFRL) and the Defense Advanced Research Projects Agency (DARPA) under agreement number FA8650-18-1-7817.

\bibliography{reference_list} 
\bibliographystyle{ieeetr}
\end{document}